\documentclass[a4paper,USenglish,cleveref,autoref,thm-restate]{lipics-v2021}

\title{On the Doubling Dimension and the Perimeter of Geodesically Convex Sets in Fat Polygons} 

\author{Mark de Berg}{Department of Mathematics and Computer Science, TU Eindhoven, the Netherlands}{M.T.d.Berg@tue.nl}{https://orcid.org/0000-0001-5770-3784}{MdB is supported by the Dutch Research Council (NWO) through Gravitation-grant NETWORKS-024.002.003.}
\author{Prosenjit Bose}{School of Computer Science, Carleton University, Canada}{jit@scs.carleton.ca}{https://orcid.org/0000-0002-8906-0573}{PB is supported in part by the  Natural Sciences and Engineering Research Council of Canada (NSERC).}
\author{Leonidas Theocharous}{School of Electrical Engineering and Computer Science, University of Ottawa, Canada}{leotheocharous3@gmail.com}{https://orcid.org/0000-0002-1707-6787}{LT is supported in part by the  Natural Sciences and Engineering Research Council of Canada (NSERC).}
\authorrunning{M.~de Berg and P.~Bose and L.~Theocharous} 
\Copyright{Mark de Berg and Prosenjit Bose  and Leonidas Theocharous} 
\ccsdesc[100]{Theory of computation~Design and analysis of algorithms}
\keywords{Fat polygons, doubling dimension} 
\nolinenumbers

\usepackage{amssymb,amsmath,mathtools,microtype,xspace}
\newcommand{\Reals}{\mathbb{R}}

\newcommand{\bd}{\partial}

\newcommand{\myint}{\mathrm{int}}  

\newcommand{\diam}{\mathrm{diam}}
\newcommand{\radius}{\mathrm{radius}}

\newcommand{\area}{\mathrm{area}}
\newcommand{\rch}{\mbox{\sc rch}}
\newcommand{\vd}{\mbox{\sc vd}}

\renewcommand{\geq}{\geqslant}
\renewcommand{\leq}{\leqslant}
\renewcommand{\ge}{\geqslant}
\renewcommand{\le}{\leqslant}

\newcommand{\etal}{et al.}

\newcommand{\eps}{\varepsilon}
\newcommand{\gdiam}{\mathrm{diam}_\mathrm{g}}
\newcommand{\per}{\mathrm{per}}
\newcommand{\fn}{\mbox{\sc fn}}
\newcommand{\gdist}{\mathrm{dist}_{\mathrm{g}}}     

\newcommand{\C}{\ensuremath{\mathcal{C}}}
\newcommand{\D}{\ensuremath{\mathcal{D}}}

\newcommand{\X}{\ensuremath{\mathcal{X}}}

\newtheorem{fact}[claim]{Fact}

\newif\ifcomments
 \commentsfalse        

\ifcomments
  \newcommand{\jit}[1]{\textcolor{blue}{Jit: #1}}
  \newcommand{\lt}[1]{\textcolor{teal}{Leo: #1}}
  \newcommand{\mdb}[1]{\textcolor{red}{Mark: #1}}
\else
  \newcommand{\jit}[1]{}
  \newcommand{\lt}[1]{}
  \newcommand{\mdb}[1]{}
\fi

\begin{document}
\maketitle

\begin{abstract}
Many algorithmic problems can be solved (almost) as efficiently in metric spaces of bounded doubling
dimension as in Euclidean space. Unfortunately, the metric space defined by points in a simple polygon equipped with
the geodesic distance does not necessarily have bounded doubling dimension. We therefore
study the doubling dimension of fat polygons, for two well-known fatness definitions.
We prove that locally-fat simple polygons \emph{do not} always have bounded doubling dimension,
while any $(\alpha,\beta)$-covered polygon \emph{does} have bounded doubling dimension
(even if it has holes).
We also study the perimeter of geodesically convex sets in $(\alpha,\beta)$-covered polygons (possibly with holes),
and show that this perimeter is at most a constant times the Euclidean diameter of the set.

Using these two results, we obtain new results for several problems on $(\alpha,\beta)$-covered polygons,
including an algorithm that computes the closest pair of a set of $m$ points in an
$(\alpha,\beta)$-covered polygon with $n$ vertices that runs in $O(n + m\log n)$ expected time. 
\end{abstract}

\section{Introduction}

\subparagraph{Motivation and background.}
In computational geometry, the standard way to model planar objects or regions is by
polygons. Because polygons can have very complicated shapes and/or
interact with each other in complicated ways, developing efficient solutions
for algorithmic problems involving polygons is not always easy. In practice, however, the objects 
or regions under consideration are usually fairly well-behaved. The \emph{realistic input models} 
paradigm~\cite{DBLP:journals/algorithmica/BergSVK02} suggests to make this precise by 
defining properties of the input that are expected to be satisfied in practice and that 
allow for easier and/or more efficient algorithmic solutions. 

One such property that has been studied extensively is \emph{fatness}, where the underlying assumption is 
that real-world objects or regions are not extremely thin;
see below for more precise definitions. For example, there have been results on motion-planning 
in environments where the obstacles are fat~\cite{DBLP:journals/comgeo/StappenHO93,DBLP:journals/dcg/StappenOBV98} 
on various graphics-related problems involving fat 
objects~\cite{DBLP:journals/comgeo/AgarwalKS95,DBLP:journals/comgeo/KatzOS92,DBLP:journals/comgeo/BergG10a,DBLP:journals/siamcomp/BergG08}, 
on flow problems on terrains consisting of fat triangles~\cite{DBLP:journals/comgeo/BergCHLT10}, 
on various data-structure problems for fat 
objects~\cite{DBLP:journals/comgeo/EfratKNS00,DBLP:journals/comgeo/Katz97,DBLP:journals/jal/OvermarsS96}, and more.
Many of these results use one of the following two fundamental results: (i)~the union complexity of $n$ 
constant-complexity fat polygons in the plane is near-linear; see~\cite{DBLP:journals/siamcomp/AronovBES14} and
the references therein; and
(ii)~any set of $n$ constant-complexity disjoint fat objects in the plane admits a  hierarchical
decomposition of the plane into constant-complexity cells that each intersect $O(1)$ objects~\cite{DBLP:journals/algorithmica/Berg00}.
Note that both results concern constant-complexity objects and that they
essentially say something about the part of the space outside the objects. We are interested in what 
happens \emph{inside} a fat polygon. 
In particular, we are interested in properties of the metric space
$(P,\gdist)$ where $P$ is a fat polygon (possibly with holes) and $\gdist$ is the geodesic distance.
In other words, $\gdist(p,q)$ is the length of a shortest path between two points~$p$ and $q$ in $P$.
There has been work on guarding and triangulating fat polygons~\cite{DBLP:journals/comgeo/AloupisBDGLS14}, 
but the special properties that the metric space $(P,\gdist)$ may or may not have for fat polygons
have not been studied much, to the best of our knowledge. This is surprising as 
problems involving shortest paths in polygons have received widespread attention,
and fatness has been studied and used extensively as well.
The only paper that is closely related to our work is by Bose, Cheong, and Dujmovic~\cite{bose11},
who study the perimeter of fat objects, as discussed below.

\subparagraph{Fatness variants.}
The study of fat polygons was initiated by Matou\v{s}ek~\etal~\cite{DBLP:journals/siamcomp/MatousekPSSW94}, 
who studied the union complexity of \emph{$\alpha$-fat triangles}, which are triangles whose angles 
are at least~$\alpha$, for some fixed constant~$\alpha>0$. Since then, the concept of fatness
has been generalized in different ways, both to convex and to non-convex objects.
For convex objects, all definitions are equivalent
in the sense that a convex object that is fat under one definition is also fat under the other
definitions~\cite{DBLP:books/daglib/0084325,thesis-vleugels}.
Our interest lies in non-convex objects, since inside a convex object the
geodesic distance is identical to the Euclidean distance. For non-convex objects,
different fatness definitions have been proposed as well, but these are no longer equivalent.
We discuss the most popular ones, focusing on definitions that apply to polygons.

Overmars, Halperin, and Van der Stappen~\cite{DBLP:journals/comgeo/StappenHO93} define a polygon~$P$ to be \emph{$\gamma$-fat} if, 
for any Euclidean disk~$D$ centered inside $P$ and not fully containing~$P$, 
we have $\area(P\cap D) \geq \gamma \cdot \area(D)$, for some fixed constant~$\gamma>0$.
This definition still allows the polygon to have very thin parts, as shown in \autoref{fig:fatness}(i).
\begin{figure}
\begin{center}
\includegraphics{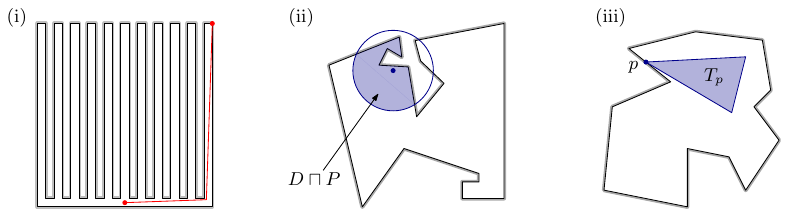}
\end{center}
\caption{(i)~A $\gamma$-fat polygon that is not locally $\gamma$-fat. 
(ii)~A locally $\gamma$-fat polygon. (iii) An $(\alpha,\beta)$-covered polygon.}
\label{fig:fatness}
\end{figure}
As a result, a collection of $n$ $\gamma$-fat, constant-complexity polygons can have
quadratic union complexity. Hence, De~Berg~\cite{DBLP:journals/dcg/Berg08} introduced 
\emph{locally $\gamma$-fat polygons}; here we require that for any Euclidean disk~$D$ 
centered inside~$P$ and not fully containing~$P$, we have 
$\area(P\sqcap D) \geq \gamma \cdot \area(D)$, where $D\sqcap P$ denotes the 
connected component of~$D\cap P$ that contains the center of~$D$; see \autoref{fig:fatness}(ii).

A third definition, by Efrat~\cite{DBLP:journals/siamcomp/Efrat05}, is the following: a polygon is
\emph{$(\alpha,\beta)$-covered} if for every $p$ on the boundary $\partial P$, 
there exists an $\alpha$-fat triangle $T_p\subset P$ with $p$ as a vertex
such that each side of $T_p$ has length at least $\beta\cdot \diam(P)$, where $\alpha$
and $\beta$ are fixed positive constants.
We will refer to such a triangle $T_p$ as a \emph{witness triangle} for the point~$p$;
see  \autoref{fig:fatness}(i). This is a stronger condition than local fatness: any $(\alpha,\beta)$-covered polygon is 
locally $\gamma$-fat for a constant~$\gamma$ that depends only on the constants~$\alpha$ and~$\beta$,
but the reverse is not true~\cite{DBLP:journals/dcg/Berg08}.

The three definitions above also apply to non-polygonal domains, even if they have holes.

\subparagraph{Our contribution.}
Recall that a metric space $(\X,d)$ is \emph{$c$-doubling} if every ball of radius~$r$ 
in the space~$\X$ can be covered by $c$ balls of radius $r/2$. The \emph{doubling dimension}
of the metric space is defined to be~$\log_{2}c$. If $c$ is a constant, then we say that 
the metric space has \emph{bounded doubling dimension}. Spaces of bounded doubling dimension
generalize Euclidean space of constant dimension, and many problems can be solved (almost)
as efficiently on spaces of bounded doubling dimension as they can in Euclidean space.
Unfortunately, if $P$ is an arbitrary polygon then the metric space~$(P,\gdist)$ may not have bounded
doubling dimension. The first problem we study is therefore: does the metric space~$(P,\gdist)$
necessarily have bounded doubling dimension when~$P$ is fat?

For the fatness definition of Overmars ~\etal~\cite{DBLP:journals/comgeo/StappenHO93},
the answer is easily seen to be no, as exemplified by the polygon in \autoref{fig:fatness}(i):
the ball centered at the red point in the middle of the polygon and whose radius~$r$ 
is the geodesic distance to the top-right vertex, needs $\Omega(n)$ geodesic balls of radius~$r/2$
to be covered. Our first two results, presented in \autoref{sec:dd}, concern the doubling dimension
of locally-fat polygons and $(\alpha,\beta)$-covered polygons.
\begin{itemize}
\item We show that locally-fat polygons do not necessarily have bounded doubling dimension.
      More precisely, we show that for any $n\geq 3$ there exists a simple polygon $P$ with $n$ vertices
      and a geodesic disk~$D$ in $P$ such that $\Omega(n^{1/3})$ geodesic disks of radius~$r$
      are needed to cover~$D$, where $r = \radius(D)$.
\item On the positive side, we prove that the doubling dimension of any $(\alpha,\beta)$-covered polygon 
      is upper bounded by a constant that only depends on the constants $\alpha$ and $\beta$ and 
      not on the number of vertices of the polygon. This result even holds for 
      $(\alpha,\beta)$-covered domains with curved boundaries and holes. 
\end{itemize}
The fact that $(\alpha,\beta)$-covered polygonal regions have bounded doubling dimension  
immediately implies a plethora of results that improve upon known results for non-fat
polygons; we mention several of them at the end of \autoref{sec:dd}.

After studying the doubling dimension of fat polygons, we turn our attention 
in \autoref{sec:geoconvex} to the following problem: can the perimeter~$\per(R)$ of a 
geodesically convex polygonal region~$R$ be bounded in terms of its 
geodesic diameter $\gdiam(R)$, or perhaps even in terms of its Euclidean diameter~$\diam(R)$? 
For arbitrary polygons the answer is clearly no. In the polygon of \autoref{fig:fatness}(i) the perimeter $\per(P)$ of $P$,
which is geodesically convex, is $ \Omega(n \cdot \gdiam(P))$, and hence arbitrarily larger than~$\diam(P)$. Bose, Cheong, and Dujmovi\'c~\cite{bose11} showed that the same can happen for locally-fat polygons, via the Koch snowflake.
For $(\alpha,\beta)$-covered simple polygons, on the other hand, they showed
that $\per(P)=O(\diam(P))$, where the constant of proportionality only depends on the 
constants $\alpha$ and $\beta$ and not on the number of vertices of the polygon. 
The applicability of this result is limited, because it only
bounds the perimeter of the polygon itself and not of geodesically convex sets
inside the polygon. We show that the result also holds in this much more general setting.
\begin{itemize}
\item For any geodesically convex polygonal set $R$ in an $(\alpha,\beta)$-covered polygonal domain~$P$,
      we have $\per(R) = O(\diam(R))$. 

\end{itemize}
A fairly immediate consequence of this result is the existence of an
$\eps$-coreset of size $O(1/\eps)$ for furthest-neighbor queries in an $(\alpha,\beta)$-covered 
simple polygon~$P$. This improves on the $O(1/\eps^2)$ bound recently proved by
De~Berg and Theocharous~\cite{DBLP:conf/compgeom/BergT24} for general simple polygons. 
The result can also be used to obtain
an algorithm for the closest-pair problem on a set $S$ of $m$ points in~$P$ whose expected
running time is linear in $m$, as explained in \autoref{sec:closest-pair}.

\subparagraph{Notation.}
We denote the Euclidean disk of radius~$r$ centered at a point~$p\in\Reals^2$ 
by $D(p,r)$ and we denote the Euclidean distance between two points~$p,q$
by~$|pq|$. For a polygon~$P$, which will always be clear from the context,
and two points $p,q\in P$, we use $\pi(p,q)$ to denote the shortest path between $p$ and 
$q$ inside $P$, and we denote the length of this path by $\|\pi(p,q)\|$. Thus,
$\gdist(p,q) = \|\pi(p,q)\|$. The geodesic disk of radius $r$ centered at a point~$p\in P$ 
is denoted by $D_{\mathrm{g}}(p,r)$. Recall that $\diam(P)$ denotes the Euclidean diameter of $P$
and that $\gdiam(P)$ denotes the geodesic diameter.
From now on, when we speak of the doubling dimension of $P$, we always
refer to the doubling dimension of the metric space $(P,\gdist)$.

\section{The doubling dimension of fat polygons}
\label{sec:dd}
In this section we investigate the doubling dimension of fat polygons.
We first show that locally-fat polygons do not necessarily have bounded
doubling dimension, and then we show that $(\alpha,\beta)$-covered
polygons do have bounded doubling dimension.

\subsection{Locally-fat polygons do not have bounded doubling dimension}
To prove that locally-fat polygons do not have bounded doubling dimension,
we first construct a family of locally-fat polygons $P_m$, for $m\geq 1$, such that 
$\diam(P_m) = \sqrt{2}$ and $\gdiam(P_m) \approx 2^m$.
The polygon $P_m$ will be constructed recursively inside the unit square $[0,1]^2$,
in such a way that the shortest path from the lower-left corner~$(0,0)$ 
to the lower-right corner~$(1,0)$ resembles the order-$m$ Hilbert curve.
\medskip

The initial polygon $P_1$, illustrated in \autoref{fig:locally-fat-construction}(i),
is constructed as follows. Take the unit square $[0,1]^2$ and a small value~$\eps>0$, and add 
the horizontal segment $[\eps,1-\eps]\times \tfrac{1}{2}$ and the vertical
segments $\tfrac{1}{2} \times [0,\tfrac{1}{2}]$ and $\tfrac{1}{2} \times [\tfrac{1}{2}+\eps,1]$.
Then slightly inflate these segments to give them width $\eps$, thus obtaining a non-degenerate
simple polygon. Note that the length of the shortest path from $(0,0)$ to $(1,0)$ approaches~2 
as~$\eps\rightarrow 0$; see the red path in \autoref{fig:locally-fat-construction}(i).
This shortest path resembles the order-1 Hilbert curve, except that the middle link
has twice the length of the first and last link, whereas in the Hilbert curve all links have the same length.
\begin{figure}
\begin{center}
\includegraphics{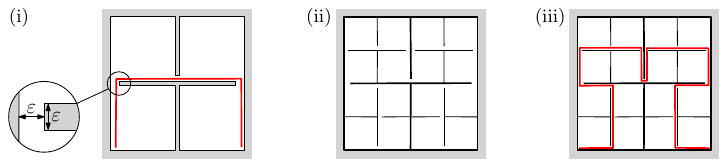}
\end{center}
\caption{(i) The basic building block~$P_1$ and the shortest path from $(0,0)$ to $(1,0)$ in~$P_1$.
(ii)~Schematic illustration of $P_2$. (iii) The shortest path from $(0,0)$ to $(1,0)$ in~$P_2$.}
\label{fig:locally-fat-construction}
\end{figure}

We will recursively construct $P_m$, for $m\geq 2$, such that the shortest 
path~$\pi((0,0),(1,0))$ from $(0,0)$ to $(1,0)$ resembles the order-$m$ Hilbert curve 
as $\eps\rightarrow 0$. 
We construct $P_m$ from $P_{m-1}$ 
as follows. We scale $P_{m-1}$ by $\frac12$ and  place four copies of it in a $2\times 2$ grid
whose cells have size $\frac{1}{2^{m-1}}$, where the south-west copy is rotated clockwise 
by $\frac\pi2$ and the south-east copy is rotated counter-clockwise by~$\frac\pi2$.
Finally, we create small openings in the top-left corner of the south-west copy,
in the bottom-right corner of the north-west copy, and in the 
top-right corner of the south-east copy. This ensures that $P_m$ is a simple polygon
and that the only way to go from $(0,0)$ to $(1,0)$ is to pass through these openings
and visit all four copies
See \autoref{fig:locally-fat-construction}(ii) for an illustration of~$P_2$. 
Note that the Euclidean diameter of $P_m$ equals $\sqrt 2$. Furthermore, 
the length $L_m$ of the shortest path~$\pi((0,0),(1,0))$ in $P_m$
(as $\eps\rightarrow 0$) satisfies
\[
L_m = 4 \cdot \left( \frac{1}{2}L_{m-1}\right) = 2 \cdot L_{m-1},
\]
where $L_1=2$. Thus, $L_m=2^m$. We obtain the following observation.
\begin{observation} \label{obs:Ln}
The polygon $P_m$ has $O(4^m)$ vertices and is such that $\diam(P_m)=\sqrt 2$ 
and $L_m\rightarrow 2^m$ as $\eps\rightarrow 0$, 
where $L_m$ is the length of the shortest path in $P_m$ from $(0,0)$ to~$(1,0)$.
\end{observation}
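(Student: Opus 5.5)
We prove the three claims separately, each by induction on $m$, following the recursive construction of $P_m$ from $P_{m-1}$.

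\emph{Vertex count.} Let $V_m$ be the number of vertices of $P_m$. The polygon $P_1$ has $O(1)$ vertices: it is the square with three inflated segments. To build $P_m$ we take four copies of $P_{m-1}$, glue them along their shared sides, and cut three openings. Each opening changes only $O(1)$ vertices, so $V_m \le 4V_{m-1} + c$ for a constant~$c$. This recurrence solves to $V_m = O(4^m)$.

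\emph{Diameter.} Every copy placed at each level lies inside the unit square. Hence $P_m\subseteq[0,1]^2$ and $\diam(P_m)\le\sqrt2$. For the reverse inequality we check that the corners $(0,0)$ and $(1,1)$ still belong to $P_m$. Each corner of $[0,1]^2$ is a corner of one of the four rotated or unrotated copies of $P_{m-1}$, and the rotations map square corners to square corners. None of the three openings removes one of the outer corners (they are cut at inner shared corners), and no inflated wall reaches these corners. Induction then gives $|(0,0)(1,1)|=\sqrt2$ with both points in $P_m$.

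\emph{Path length.} This is the main part. Denote the lengths at parameter $\eps$ by $L_m(\eps)$. For the upper bound, concatenate the four scaled copies of the shortest path in $P_{m-1}$ between the appropriate corners of each sub-square, joining consecutive copies through the openings. The copies are rotated so that their entry and exit corners coincide with those openings, exactly as in the Hilbert curve. This yields $L_m(\eps)\le 2L_{m-1}(\eps')+O(\eps)$, where $\eps'$ is the scaled width.

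The lower bound needs the fact, stated in the construction, that the openings are the only passages between the sub-squares. Since the walls separate the copies except at the openings, any path from $(0,0)$ to $(1,0)$ must visit the four copies in order and cross each one between the corresponding pair of corners. Within a copy it therefore has length at least $\tfrac12 L_{m-1}$, up to an $O(\eps)$ correction, because the openings sit at those corners.

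Combining the two bounds gives $L_m(\eps)=2L_{m-1}+O(4^m\eps)$, with the base case $L_1(\eps)\to2$. The base case holds because the red path has length $2+O(\eps)$, and any path must travel up to height $\tfrac12$, across, and back down, which costs at least $2-O(\eps)$. Letting $\eps\to0$ for fixed $m$ gives $L_m\to2^m$.

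The main obstacle is the lower bound: we must make rigorous that every path is forced through each copy corner-to-corner. I would argue this by noting that each copy, as a subset of $P_m$, is connected to the rest of $P_m$ only through openings of size $O(\eps)$ located at two corners of the copy.
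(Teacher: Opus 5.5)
Your proposal is correct and follows the same route as the paper. The paper notes that $P_m\subseteq[0,1]^2$ contains the corners, and gets $L_m = 4\cdot\tfrac12 L_{m-1} = 2L_{m-1}$ with $L_1=2$ because a path must cross the four copies corner-to-corner through the openings. Your version adds explicit upper and lower bounds with $O(\eps)$ error terms and the vertex recurrence $V_m\le 4V_{m-1}+c$, which the paper leaves implicit, but the argument is the same.
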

%
The next lemma shows that $P_m$ is locally fat.
\begin{lemma}\label{lem:locally-fat}
The polygon~$P_m$ is locally $\gamma$-fat for $\gamma=\frac{1}{8\pi}$.
\end{lemma}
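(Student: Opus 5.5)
Consider a Euclidean disk $D=D(p,r)$ with center $p\in P_m$ that does not fully contain $P_m$. We need $\area(P_m\sqcap D)\geq \frac{1}{8\pi}\cdot\pi r^2 = r^2/8$. The plan is to exploit the fact that $P_m$ (as $\eps\to0$) is essentially a union of grid cells of the finest level: the walls are thin slabs of width $\eps$ along grid lines of the dyadic grid of cell size $s=1/2^m$, and every cell of this grid is fully free space except for an $O(\eps)$-neighborhood of its boundary. The argument splits on the size of $r$ relative to $s$, and all constants are computed in the limit $\eps\to0$; the slack in $1/8$ absorbs the $O(\eps)$ losses. As usual, we may assume $r\leq \diam(P_m)=\sqrt2$; otherwise the disk contains $P_m$.

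\emph{Small disks, $r\leq s$.} The center $p$ lies in some free grid cell $C$ of side $s$, an axis-parallel square, up to the $\eps$-walls. The connected component $P_m\sqcap D$ contains $D\cap C$, because $D\cap C$ is convex and hence connected. Since $C$ is a square of side $s\ge r$ containing $p$, at least one quadrant of $D$ around $p$ lies in $C$ up to $O(\eps)$ effects. To see this, choose the quadrant pointing toward the cell's center. Then $\area(D\cap C)\ge \pi r^2/4$ when the disk is small, and in general $\ge\min(\pi r^2/4, \cdot)$. I will state this cleanly as follows: if $p\in C$ and $r\le s$, then $D\cap C$ contains the axis-parallel square of side $r/\sqrt2$ in a suitable quadrant direction. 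This gives area at least $r^2/2\ge r^2/8$.

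\emph{Large disks, $s<r\le\sqrt2$.} Here we use the recursive structure. Let $k$ be the level such that the dyadic cells of side $t=2^{-k}$ satisfy $t< r\le 2t$ (cap at $k=0$). The key claim is that every level-$k$ block (a scaled copy of $P_{m-k}$, possibly rotated) is connected internally, and its free space has area $t^2$ up to $O(\eps)$. The center $p$ lies in some level-$k$ block $B$, a square of side $t$. The part of $D$ inside $B$ need not be connected in $D$. So instead I would take a smaller block: the level-$(k+2)$ block $B'$ of side $t/4$ that contains $p$. Its diameter is $\sqrt2 t/4< r$, so $B'\subset D$. Since the free space of $B'$ is connected by construction, it lies entirely in $P_m\sqcap D$. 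This gives area $\geq t^2/16\geq r^2/64$, which is too weak. To reach $1/8$, I would instead use the largest block containing $p$ that fits inside $D$. A block of side $u$ containing $p$ fits when $\sqrt2 u\le r$, and we can choose $u > r/(2\sqrt2)$, giving area $u^2>r^2/8$. This matches the constant $\frac{1}{8\pi}$ exactly, which suggests it is the intended argument. It also makes the small-$r$ case unnecessary whenever $r\ge \sqrt2 s$; the remaining range $s<r<\sqrt2 s$ is handled by the small-disk quadrant argument, with cell side $s\ge r/\sqrt2$, giving area at least $r^2/4$.

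\emph{Main obstacle.} The delicate step is verifying that the free space of every dyadic sub-block of every level is connected. This is true because each block is a copy of some $P_j$, and $P_j$ is a simple polygon by construction (the openings join the four copies). We must also check that the inserted openings and the rotations do not break this connectivity, and then handle $\eps\to0$ rigorously. For the latter, I would choose $\eps$ small enough relative to $s$ that the strict inequality $u^2>r^2/8$ survives the $O(\eps/s)$ relative loss. If needed, I would shave a constant by strict inequality in the choice of $u$.
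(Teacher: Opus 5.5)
Your final argument is essentially the paper's proof. For $r$ below the diameter $\sqrt2\,s$ of a finest cell, you use the convexity of the free cell containing $p$ (the paper's Cases~I--II). Otherwise, you take the largest dyadic block containing $p$ whose diameter is at most $r$; it is internally connected by construction and has area $u^2>r^2/8$, which is exactly the paper's Case~III with the block $S_{\ell+1}$.

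Two small repairs are needed.
\begin{enumerate}
\item The square of side $r/\sqrt2$ cornered at $p$ in the quadrant pointing toward the centre of $C$ need not fit in $C$ once $r>s/\sqrt2$ (for instance, when $p$ is at the centre of the cell). Instead take any square of that side that contains $p$ and lies inside $C$. Its diameter is $r$, so it lies in $D$.
\item The inequality $u^2>r^2/8$ is not uniformly strict: $r^2/8\to u^2$ as $r\to 2\sqrt2\,u$. So the $O(\eps/s)$ loss cannot be absorbed by slack. The exact constant $\frac{1}{8\pi}$ requires the same ``$\eps$ infinitesimal'' convention the paper uses, or else a marginally smaller $\gamma$.
\end{enumerate}
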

\begin{proof}
    For simplicity we assume in the computations below that $\eps$ is infinitesimally small.
    Let $D(p,r)$ be a disk centered at a point $p\in P_m$ such that $D(p,r)$ 
    does not contain $P_m$ in its interior. Consider a hierarchical grid inside the 
    unit square~$[0,1]^2$, where the cells at level~$\ell$ have side length $\tfrac1{2^\ell}$ 
    and diameter $\tfrac{\sqrt{2}}{2^\ell}$. 
    Let $S_0,S_1,...,S_{m}$ be the squares of level $0,1,...,m$ containing $p$. 
    Since $P_m \not \subset D(p,r)$, we have $r< \sqrt{2}$. We consider three cases.
    \begin{itemize}
    \item Case~I: $r< \frac{\diam(S_m)}{2}$. 
        Then $D(p,r)$ does not contain~$S_m$. Moreover, since $S_m$ is a square at the deepest level,
        it doesn't contain any parts of $\partial P$ in its interior by construction. 
        This implies that $\area(D(p,r)\sqcap P_m)\geq \tfrac{\pi r^2}{4}$. 
        Hence,
        \[
        \area(D(p,r)\sqcap P_m)\geq \area(D(p,r)\sqcap S_m)\geq \frac{\pi r^2}{4} \geq \frac{1}{4}\cdot \area(D(p,r)).
        \]
    \item Case~II: $\frac{\diam(S_m)}{2} \leq r <\diam(S_m)$. 
         Then $D(p,r)$ contains more than $1/4$ of the area of~$S_m$, and so
        \[
        \area(D(p,r)\sqcap P_m) > \frac{1}{4} \cdot \left( \frac{1}{2^m}\right)^2  
            = \frac{1}{8\pi} \cdot \pi \left( \frac{\sqrt{2}}{2^m}\right)^2  
            > \frac{1}{8\pi} \cdot  \area(D(p,r)).
        \]
    \item Case~III: $r \geq \diam(S_m)$.
          Then there exists an $\ell$ with $0\leq \ell < n$ such that $\diam(S_{\ell+1}) \leq r \leq \diam(S_{\ell})$. 
          Since $p\in S_{\ell+1}$, we thus have $S_{\ell+1} \subset D(p,r)$. 
          Moreover, by construction, all points of $P_m$ within $S_{\ell+1}$ are connected via a path 
          that stays inside $S_{\ell+1}$. This means $ S_{\ell+1} \subset D(p,r)\sqcap P_m$.  
          Since $r \leq \diam(S_\ell)$, we also have $\area(D(p,r)) \leq  \pi \left( \tfrac{\sqrt{2}}{2^{\ell}} \right)^2$. Hence,
          \[
          \area(D(p,r)\sqcap P_m) \geq \left( \frac{1}{2^{\ell+1}} \right)^2 
          = \frac{1}{8\pi} \cdot \pi \left( \frac{\sqrt{2}}{2^{\ell}}\right)^2  
          \geq \frac1{8\pi} \cdot \area(D(p,r)),
          \]
          which concludes the proof.
        \end{itemize}     
\end{proof}
One may think that we now only need to show that $P_m$ does not have bounded
doubling dimension but, even though $\lim_{m\rightarrow \infty} \frac{\gdiam(P_m)}{\diam(P_m)} = \infty$, this is not the case. 
\begin{lemma}[restate=ddPn,name=]\label{lem:dd-Pn}
    The doubling dimension of the polygon $P_m$ is $O(1)$.
\end{lemma}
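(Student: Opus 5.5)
We will show that every geodesic disk $D_{\mathrm g}(p,r)$ in $P_m$ can be covered by $O(1)$ geodesic disks of radius $r/2$. The tool is the hierarchical grid from the proof of \autoref{lem:locally-fat}, combined with the recursive structure of~$P_m$. The key structural fact we will prove first is a comparison between geodesic and grid-scale distances. Let $S$ be any grid cell of level $\ell\leq m$. Then $P_m\cap S$ is a scaled and rotated copy of $P_{m-\ell}$, possibly with small openings at corners. Moreover, any two points of $P_m\cap S$ are connected inside $S$ by a path of length $\Theta(2^{m-\ell}\cdot 2^{-\ell}) = \Theta(2^{m-2\ell})$. The upper bound follows from \autoref{obs:Ln} by induction: the corridor structure lets one walk along the Hilbert-like path, plus $O(1)$ cell-diameter detours. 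The lower bound holds for points in well-separated sub-cells, since the path must traverse entire sub-copies.

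Next we would introduce a natural one-dimensional parametrization. The shortest path $\pi((0,0),(1,0))$ visits the level-$m$ cells in Hilbert order, and every point of $P_m$ lies within distance $O(2^{-m})$ of it. Since all corridors have width about $2^{-m}$ with $\eps\to 0$, every point lies in some level-$m$ cell reachable from the path locally. We then claim that $(P_m,\gdist)$ is bi-Lipschitz, up to an additive $O(2^{-m})$ term, to an interval of length $2^m$ with the usual metric, where each level-$m$ cell corresponds to a unit subinterval. Concretely, for points $p,q$ in level-$m$ cells with Hilbert indices $i\leq j$, we aim to show $\gdist(p,q)=\Theta(j-i+1)\cdot 2^{-m}$. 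The upper bound comes from following the path. For the lower bound we use the fact that $P_m$ is simple and essentially tree-like (a thickened path with dead-end pockets): any path from $p$ to $q$ must cross the openings separating the consecutive sub-copies that lie between them, which forces length at least that of the intervening cells. Actually the red path has links of different lengths, so we will prove the weaker statement that $\gdist$ is within constant factors of the length of the Hilbert-path portion between $p$ and $q$, which is all we need.

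Given this near-isometry to a path metric of bounded "thickness", the doubling bound is immediate. A geodesic disk of radius $r$ corresponds to a set of level-$m$ cells forming an interval of Hilbert length $O(r)$, together with an $O(2^{-m})$-neighborhood. If $r\geq C2^{-m}$, we cover the interval with $O(1)$ subintervals of length $r/(2C')$ and take one disk centred in each. If $r<C2^{-m}$, the disk lies in $O(1)$ level-$m$ cells, which are convex pieces up to $\eps$-corridors, so the Euclidean doubling property of the plane gives $O(1)$ disks.

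The main obstacle is the lower bound $\gdist(p,q)\gtrsim$ path-length between them. In particular, we must exclude shortcuts through the side branches of each $P_1$ block, for instance pockets created by the inflated segments. We will handle this by induction on $m$, showing that the openings are cut points of $P_m$ that every path must cross. This reduces the lower bound to the statement within each sub-copy, where the recursion and \autoref{obs:Ln} apply.
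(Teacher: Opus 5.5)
Your proposal is correct and follows essentially the paper's route. The paper also orders the $4^m$ convex level-$m$ cells along the Hilbert-like path and uses that each cell is entered only from its predecessor or successor, which is your cut-point property of the openings. From this it gets that a geodesic disk is a run of $z$ consecutive full cells plus $O(1)$ partial cells, with $r\ge z/2$. It then covers the run with $O(1)$ disks centred along the path and the partial cells via Euclidean doubling inside convex cells. Your first-paragraph comparison at every grid level $\ell$ is not needed, since only the level-$m$ chain structure is used.
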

\begin{proof}
As in the proof of Lemma~\ref{lem:locally-fat}, we will assume that $\eps$ is infinitesimally small.
It will also be convenient to scale the polygon $P_m$ by a factor $2^m$. Thus $P_m$
consists of $2^m\times 2^m$ cells that are  unit squares. We number these cells $s_1,\ldots,s_k$, where $k={4^m}$,
in the order in which they are visited by the shortest path from the lower-left corner of $P_m$ 
to its lower-right corner. Observe that the only way to enter a cell~$s_i$ is from the
previous cell~$s_{i-1}$ or the next cell~$s_{i+1}$.

Now consider a geodesic disk~$D_g(p,r)$. 
   
We must show that $D_g(p,r)$ can be covered
by $O(1)$ geodesic disks of radius~$r/2$. Let $s_j$ be the cell containing the point~$p$. 
Then $D_g(p,r)$ consists of $z\geq 0$ consecutive cells $s_i,\ldots,s_{i+z-1}$  
that are fully contained in $D_g(p,r)$ plus at most four cells that are partially 
contained in $D_g(p,r)$; see \autoref{fig:linear-order-new} (i).  
\begin{figure}
\begin{center}
\includegraphics{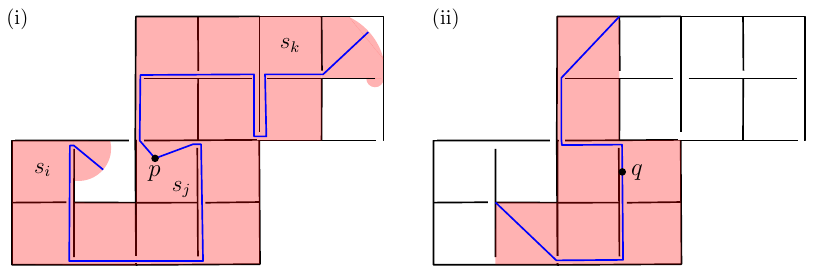}
\end{center}
\caption{Illustration for the proof of \autoref{lem:dd-Pn}. Note that only the relevant part of~$P_m$ is shown.}
\label{fig:linear-order-new}
\end{figure}
Because any Euclidean disk can be covered by seven Euclidean disks of half its radius and the
cells $s_t$ are convex, we know that $D_g(p,r) \cap s_t$
can be covered by at most seven geodesic disks of radius~$r/2$, for any cell $s_t$.
Hence, the part of $D_g(p,r)$ inside the four partially covered cells can be 
covered by 28~geodesic disks of radius~$r/2$.

Now consider the $z$ cells $s_{i},\ldots, s_{i+z-1}$ that are fully contained in~$D_g(p,r)$. 
Assume that~$z>0$, otherwise we are done.
Since $\gdiam(s_{i} \cup \cdots \cup s_{i+z-1}) \geq z$, we must have $r\geq \tfrac{z}{2}\geq \tfrac{1}{2}$.
We now have two cases.
\begin{itemize}
\item If $z\leq 7$ then we can cover each of the $z$ cells 
      by seven geodesic disks of radius~$r/2$, using at most 49 disks in total.
\item If $z\geq 8$ then we partition the sequence $s_{i},\ldots, s_{i+z-1}$ into eight 
      subsequences, each consisting of at most $\lceil z/8 \rceil$ cells.
      Note that any sequence of $z^*$ cells can be covered by a geodesic disk
      of radius $(z^*-2)/2+\sqrt{2}$. This is done by placing such a disk at the midpoint $q$ of a longest path between the first and last cell of that sequence; see Figure~\ref{fig:linear-order-new}(ii).
      Hence, a sequence of $\lceil z/8 \rceil$ cells
      can be covered by a geodesic disk of radius
      \[
      \frac{\lceil z/8 \rceil -2}{2} + \sqrt{2} 
        < \frac{z}{16} + (\sqrt{2}-\tfrac{1}{2}) 
        \leq  \frac{z}{16} +  \frac{z}{8}  < \frac{z}{4} \leq r/2
      \]
      Thus, in this case we use eight disks to cover the full sequence of~$z$ cells.
\end{itemize}
We conclude that $D_g(p,r)$ can be covered by at most 28+49 disks of radius~$r/2$. 
Hence, the doubling constant is at most~77,
and the doubling dimension is at most $\log 77 \approx 6.27$.
\end{proof}

The lemma above shows that by itself, the polygon $P_m$ does not show that there
are locally-fat polygons with unbounded doubling dimension. However, we can glue
copies of $P_m$ together to form a locally-fat polygon with unbounded doubling dimension.
\begin{figure}
\begin{center}
\includegraphics{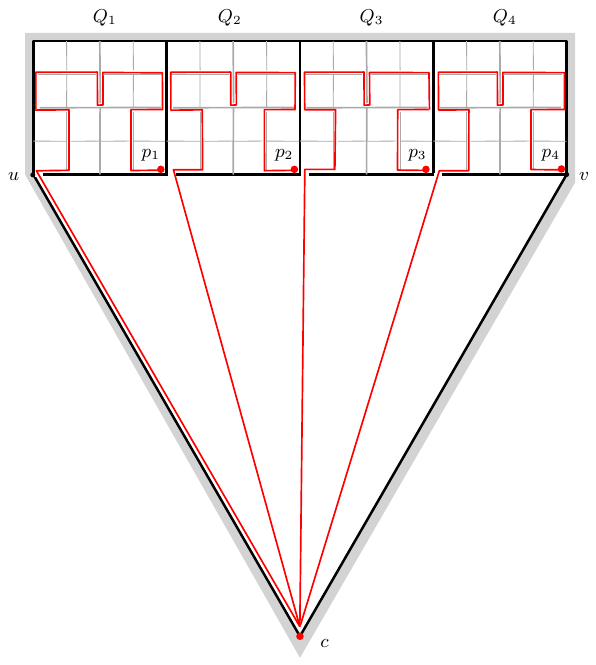}
\end{center}
\caption{A chain of scaled copies $Q_1,\dots,Q_{2^m}$ of the polygon $P_m$---in 
this example $m=2$---and the final polygon~$P^*_m$
obtained by attaching an equilateral triangle with apex~$c$.
The red curves show the paths from $c$ (slightly displaced for clarity) to the points~$p_i$.}
\label{fig:unbounded-dd}
\end{figure}
\begin{theorem} \label{thm:locallyfat-unbounded}
For any $n \geq 3$ there exists a simple polygon $P$ with $n$ vertices 
that is locally fat and whose doubling constant is $\Omega(n^{1/3})$. 
\end{theorem}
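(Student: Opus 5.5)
We follow the construction suggested by \autoref{fig:unbounded-dd}. Fix $m$ and take $k=2^m$ scaled copies $Q_1,\dots,Q_k$ of $P_m$, each scaled so that its side length is $s$. We place them side by side in a horizontal chain, with $Q_i$ occupying $[(i-1)s,is]\times[0,s]$. Consecutive copies are glued along their common vertical side through a small opening near the top, so that they form one simple polygon and no shortcut between copies appears. Below the chain we attach a large equilateral (hence fat) triangle with apex $c$ far below, whose top side is the bottom edge $[0,ks]\times\{0\}$ of the chain. In each $Q_i$ we let $p_i$ be the point whose geodesic distance from the entry point on the bottom edge is as large as possible; by \autoref{obs:Ln}, with the opening placed at a corner, this distance is about $L_m s=2^m s$. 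The final polygon is $P^*_m$, and it has $n=\Theta(k\cdot 4^m)=\Theta(8^m)$ vertices. We need $\Omega(k)=\Omega(2^m)=\Omega(n^{1/3})$ disks, and other values of $n$ are handled by padding with dummy vertices.

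\textbf{Step 1 (local fatness).} The proof of \autoref{lem:locally-fat} can be rerun on the hierarchical grid of each copy. For disks centred near a copy boundary or in the triangle, we use that the triangle is fat and that each copy square at the relevant level lies in $D\sqcap P$. For disks larger than the side $s$, the relevant set contains whole copies or a large portion of the triangle. In either case the area fraction is bounded below by a constant, possibly smaller than $\frac1{8\pi}$.

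\textbf{Step 2 (geometry of distances).} We choose the height of the triangle $h\gg ks$, so that $|cx|$ varies only negligibly, say by less than $s$, as $x$ ranges over the bottom edge. Each $p_i$ is reached from $c$ by a straight segment to the entry of $Q_i$, followed by a path of length about $2^m s$ inside $Q_i$. Hence $\gdist(c,p_i)\approx h+2^m s$ up to $O(s)$. We set $r=h+2^m s+O(s)$, so that $D_g(c,r)$ contains every $p_i$.

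\textbf{Step 3 (lower bound on the covering number).} Suppose a disk $D_g(q,r/2)$ contains both $p_i$ and $p_j$ with $i\neq j$. Any path from $p_i$ to $p_j$ must leave $Q_i$ through its entry region, which costs about $2^m s$, and must enter $Q_j$, which costs another $2^m s$. So $\gdist(p_i,p_j)\gtrsim 2\cdot 2^m s$. The radius $r/2$ alone is not enough to rule this out, because $r/2$ includes the term $h/2$. The real constraint is that $q$ must be within $r/2$ of both points, and $r/2<h/2+2^{m-1}s+O(s)$ is much smaller than $2^m s$ only if $h$ is small. This calls for a different balance: take the triangle height $h$ comparable to $2^m s$ rather than huge. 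With $h=\Theta(2^m s)$, and the chain width $ks=2^m s$ also of that order, the triangle stays fat. Then $r=\Theta(2^m s)$, and we tune the constants so that $r/2<(1-\delta)2^m s$ while $\gdist(p_i,p_j)\geq 2\cdot 2^m s(1-o(1))$. Every disk of radius $r/2$ would then contain at most one $p_i$, so at least $k$ disks are needed.

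\textbf{Main obstacle.} The hard part is Step~3 under this balancing. With $h=\Theta(ks)$, the Euclidean distance from $c$ to different copies varies by $\Theta(ks)$. The fix I would try first is to replace the flat bottom edge by an arc of a Euclidean circle centred at $c$, approximated by a polygon, with the copies placed radially along it. This makes all $\gdist(c,p_i)$ equal up to $O(s)$ while keeping the attached sector fat. Then $r\approx R+2^m s$, where $R=\Theta(2^m s)$ is the sector radius. The radius-$r/2$ disks are still centred far from most copies, and the geodesic separation between the $p_i$, each at depth $2^m s$ inside its own copy, should force at most $O(1)$ of the $p_i$ per disk. The delicate work is checking this inequality with explicit constants, together with local fatness along the arc.
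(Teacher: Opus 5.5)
\textbf{Gap: the key inequality is never proved.} Your plan is the paper's: $k=2^m$ copies of $P_m$ of side $s$ hung on a fat triangle with apex $c$, with one deepest point $p_i$ per copy. The inequality that carries the theorem is that all $p_i$ lie in $D_{\mathrm{g}}(c,r)$ while $\gdist(p_i,p_j)>r$ for $i\neq j$. You call this the main obstacle, drop the flat base, and leave the sector construction unverified (``should force at most $O(1)$ of the $p_i$ per disk'').

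The obstacle is not real, because the separation bound does not involve $c$. Let $u_i$ be the entry of $Q_i$, and let $\rho$ be the largest geodesic distance from $u_i$ to a point of $Q_i$. This is attained at $p_i$ and is the same for all $i$, and by \autoref{obs:Ln} $\rho\to 2^m s=ks$ as $\eps\to 0$.
\begin{itemize}
\item Every path from $p_i$ to $p_j$ passes through $u_i$ and $u_j$, so $\gdist(p_i,p_j)\geq 2\rho+|u_iu_j|\geq 2\rho+s$, up to $O(\eps)$.
\item On the other side, $\gdist(c,p_i)\leq |cu_i|+\rho$.
\end{itemize}
So only $\max_i|cu_i|$ enters, and the $\Theta(ks)$ variation of $|cx|$ along the base does not matter.

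The base has length exactly $ks\approx\rho$, so the equilateral triangle gives $\max_i|cu_i|\leq ks$. Put $r:=ks+\rho$. Then every $p_i$ lies in $D_{\mathrm{g}}(c,r)$, and $\gdist(p_i,p_j)\geq 2\rho+s>r$ as soon as $\rho>(k-1)s$, which holds for small $\eps$. Hence no geodesic disk of radius $r/2$ contains two of the $p_i$, and at least $k=2^m$ such disks are needed. This is the paper's argument, normalized to $s=2^{-m}$, base $1$ and $r=2$. Your own Step~3 tuning with a somewhat flatter triangle, giving $\max_i|cu_i|\leq(1-2\delta)ks$, would also have worked. The circular arc is unnecessary.

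\textbf{Second flaw: the inter-copy openings.} You connect consecutive copies through openings near the top of their common side, and you also give each copy its own entry from the triangle. Then the triangle, $Q_i$, the top opening and $Q_{i+1}$ form a cycle around the wall between $Q_i$ and $Q_{i+1}$. So $P^*_m$ has $k-1$ holes and is not simple.

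These openings also break your Step~3 claim that every path from $p_i$ to $p_j$ must leave $Q_i$ through its entry. They join intermediate stretches of the two Hilbert-like corridors, at depths about $\frac23 L$ in $Q_i$ and $\frac13 L$ in $Q_{i+1}$, where $L=2^m s$. This gives $\gdist(p_i,p_{i+1})\lesssim L<r$, so $p_i$ and $p_{i+1}$ fit in a single disk of radius $r/2$.

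The fix is to omit these openings. The triangle alone makes the union connected and simple. The absence of any other connection is exactly what forces every path from $p_i$ to $p_j$ through $u_i$ and $u_j$.
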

\begin{proof}
    Let $Q_1,Q_2,\dots,Q_k$ be $k :=2^m$ copies of $P_m$ scaled by a factor of $\frac{1}{2^m}$.
    By \autoref{obs:Ln} each $Q_i$ has geodesic diameter $1$.
    We place the $Q_i$ next to each other such that the right side of $Q_i$ coincides with 
    the left side of $Q_{i+1}$, for $1\leq i\leq k-1$; see \autoref{fig:unbounded-dd}.
    We make a small opening at the lower-left corner of each~$Q_i$. 
    Let $u$ be the lower-left corner of $Q_1$ and $v$ be the lower-right corner $Q_k$. 
    Clearly, $|uv| = 1$. Consider the point $c$ below $uv$ such that $\triangle cuv$ 
    is equilateral. The polygons $Q_1,\ldots,Q_k$, together with the segments $cu$ and $cv$
    define a polygon, which we denote by $P^*_m$. Since the $Q_i$ are locally fat, the
    polygon $P^*_m$ is fat as well. However, the geodesic disk $D(c,2)$ cannot be covered 
    by a constant number of geodesic disks of radius~$1$. Indeed, if $p_i$ is the point of 
    $Q_i$ at furthest geodesic distance from $c$, then any two $p_i,p_j$ have geodesic distance 
    $\gdist(p_i,p_j)>2$ and can thus never be covered by the same geodesic disk of radius one.
    Hence, the doubling constant of $P^*_m$ is at least $2^m$.

    Note that each $Q_i$ has $O(4^m)$ vertices. Hence, if $n$ denotes the number
    of vertices of $P^*_m$ then $n=O(8^m)$. Thus, the
    doubling constant of $P^*_m$ is $\Omega(n^{1/3})$.
\end{proof}

\subsection{$(\alpha,\beta)$-covered domains have bounded doubling dimension}
Recall that a domain~$P$, possibly with holes and a curved boundary, is $(\alpha,\beta)$-covered if for 
every $p\in \partial P$ there exists a triangle $T_p\subset P$ with
$p$ as a vertex such that $T_p$ is $\alpha$-fat---each angle is
at least~$\alpha$---and such that each side of $T_p$ has length at least $\beta\cdot \diam(P)$.
\begin{theorem} \label{thm:alpha-beta-dd} 
Any $(\alpha,\beta)$-covered domain~$P$ has bounded doubling dimension.
In particular, it is at most $\log c(\alpha,\beta)$, where
$c(\alpha,\beta) := \max \left( \left( \left\lceil\frac{48}{\sin\alpha} \right\rceil+1\right)^{2},  
                              \left(\left\lceil \frac{16}{\beta\sin\alpha}\right\rceil+1\right)^2\right)$.
\end{theorem}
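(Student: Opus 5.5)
We need to show that every geodesic disk $D_g(p,r)$ in $P$ can be covered by $c(\alpha,\beta)$ geodesic disks of radius $r/2$. The two terms in the maximum suggest two regimes, small $r$ and large $r$. In both, the plan is to lay a grid over the Euclidean region containing $D_g(p,r)$ and to use each non-empty grid cell to centre one geodesic disk of radius $r/2$. This works provided that any two points of $P$ in a common cell are at geodesic distance at most $r/2$.

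The key ingredient is a \emph{local detour lemma}. Let $x,y\in P$ with $|xy|=\delta$. If $\delta$ is small relative to $\beta\diam(P)$, then $\gdist(x,y)\le C\delta/\sin\alpha$ for an absolute constant $C$.

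To prove it, first suppose the segment $xy$ lies in $P$; then we are done. Otherwise take the first and last points where $xy$ meets $\bd P$, and use their witness triangles. A witness triangle $T_q$ has angle at least $\alpha$ at $q$ and sides of length at least $\beta\diam(P)$. It therefore contains a small ``cone'' at $q$ in which points at distance $t\le\beta\diam(P)$ from $q$ are visible from $q$. The idea is to move along the boundary portion between the two crossing points using witness triangles. I would expect the cleaner route to be this: for any point $z\in P$, the disk $D(z,\rho)$ with $\rho\le\beta\diam(P)\sin\alpha/2$ contains, within geodesic distance $O(\rho)$ of $z$, a point whose witness-triangle neighborhood contains a Euclidean disk of radius $\Omega(\rho\sin\alpha)$. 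This yields a comparison $\gdist\le O(|xy|/\sin\alpha)$ only inside one connected component of $D(x,\cdot)\cap P$. The main obstacle is to rule out that $x$ and $y$ are Euclidean-close but lie in different local components with a long geodesic between them, since holes are allowed. I would handle this with a shooting argument. Walk from $x$ toward $y$ along the segment. Each time we hit the boundary at a point $q$, $T_q$ supplies a fat triangle whose two long sides make angle at least $\alpha$ at $q$. Replacing the blocked part by travel along a side of $T_q$ and back costs a factor $O(1/\sin\alpha)$. A packing argument bounds the number of such replacements at a given scale, because the disjoint fat pieces of witness triangles each have area $\Omega((\delta\sin\alpha)^2)$ inside $D(x,2\delta)$. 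This is the step whose constants produce the $48/\sin\alpha$.

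\textbf{Small radius, $r\le \beta\diam(P)$ (up to a constant).} Here $D_g(p,r)\subseteq D(p,r)$, which sits in a $2r\times 2r$ square. Partition this square into a grid of $k\times k$ cells with $k=\lceil 48/\sin\alpha\rceil$, so each cell has diameter $O(r\sin\alpha)$. By the detour lemma, any two points of $P$ in a cell have geodesic distance at most $r/2$. For each cell meeting $D_g(p,r)$, choose one such point as a centre. This gives at most $(k+1)^2$ disks; the $+1$ absorbs the alignment of the grid.

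\textbf{Large radius.} When $r$ exceeds a constant fraction of $\beta\diam(P)$, cover the bounding square of $P$, of side $\diam(P)$, by a grid with $\lceil 16/(\beta\sin\alpha)\rceil$ cells per side. Each cell then has diameter at most roughly $\beta\diam(P)\sin\alpha/11$, which lies within the range of the detour lemma. Hence every cell has geodesic diameter at most $r/2$, and at most $(\lceil16/(\beta\sin\alpha)\rceil+1)^2$ disks suffice. Taking the maximum over the two regimes gives $c(\alpha,\beta)$. The delicate part throughout is the detour lemma; the covering itself is routine.
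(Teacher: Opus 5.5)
The gap is your local detour lemma, and with it the claim that any two points of $P$ in a common small cell are within geodesic distance $r/2$. This claim is false in both of your regimes. The $(\alpha,\beta)$-covered condition constrains $P$ but not its complement, and the complement may be arbitrarily thin. Cut a horizontal rectangular notch of length $\frac12$ and width $2\eps$ into the unit square, starting from the midpoint of its left side. Equilateral triangles of side $\frac15$ fit inside $P$ at every boundary point, so for every small $\eps>0$ this simple polygon is $(\frac{\pi}{3},\frac{1}{5\sqrt2})$-covered. Yet $a=(\frac14,\frac12+2\eps)$ and $b=(\frac14,\frac12-2\eps)$ satisfy $|ab|=4\eps$ and $\gdist(a,b)\ge\frac12$, since every path must go around the tip of the notch. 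So the situation you flag as the main obstacle genuinely occurs, and no shooting or packing argument can exclude it.

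This breaks your covering directly. Take $r\le\beta\diam(P)$, a suitable centre $p$ near the tip of the notch, and $\eps$ tiny compared with the cell size. Then a cell crossed by the notch at horizontal distance about $r/3$ from the tip contains points of $D_{\mathrm g}(p,r)$ on both sides of the notch. Their geodesic distance is about $2r/3>r/2$, so no single centre serves that cell. Taking one centre per connected component of a cell's intersection with $P$ does not rescue the count either: along a fine zigzag boundary of fat teeth, one small cell can meet arbitrarily many components.

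The missing idea is to stop comparing pairs of points in a cell. Instead, use the grid \emph{points} themselves as centres; there are $(g+1)^2$ of them, which is where the $+1$ in $c(\alpha,\beta)$ comes from. Then show that every $u\in D_{\mathrm g}(p,r)$ is within geodesic distance $r/2$ of some grid point, using only segments known to lie in $P$.
\begin{itemize}
\item If $D(u,r/4)\subseteq P$, a grid point lies in that disk and is visible from $u$.
\item Otherwise, let $u^*$ be a point of $\bd P$ nearest to $u$, so $|uu^*|<r/4$ and $uu^*\subseteq P$. Let $t=\min(r,\beta\diam(P))/4$. The points at distance $t$ from $u^*$ on the two sides of $T_{u^*}$ incident to $u^*$ span, with $u^*$, an isosceles triangle inside $T_{u^*}$. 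Its inradius is at least $t\sin\alpha/4$, which is at least the cell side, so it contains a grid point $q$. Since $T_{u^*}\subseteq P$ is convex, $\gdist(u,q)\le|uu^*|+|u^*q|\le r/2$.
\end{itemize}
Points of one cell on opposite sides of a notch are then simply served by different grid points. You mention in passing that every point is geodesically close to a large Euclidean disk inside a witness triangle. That is precisely the needed ingredient, but it should be used to assign points to grid points, not to prove a two-point detour bound.

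Your split into $r\le\beta\diam(P)$ and $r>\beta\diam(P)$ can be kept, with two adjustments.
\begin{itemize}
\item In the first case the grid must reach distance $\frac32 r$ from $p$, so use a $3r$-square rather than a $2r$-square.
\item In the second case, a grid on a square of side $\diam(P)$ containing $P$ keeps the cell side at most $\beta\sin\alpha\,\diam(P)/16$ however large $r$ is. The paper's single $3r$-square with fixed $g$ guarantees this only when $r\le\diam(P)/3$.
\end{itemize}
Together the two cases give exactly $c(\alpha,\beta)$.
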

\begin{proof}
Consider a geodesic disk $D := D_{\mathrm{g}}(p,r)$ with center~$p$ and 
radius~$r$ in~$P$. Let $B$ be the square of side $3r$ centered at $p$.  
We partition $B$ into a regular square grid $G$ with $g \times g$ cells, where $g := \max ( \lceil\frac{48}{\sin\alpha}\rceil, \lceil \frac{16}{\beta\sin\alpha}\rceil)$; see \autoref{fig:bounded-dd}. Let $s$ denote the side length of a cell. Then we can see that $s = \frac{3r}{g} \leq \frac{m\cdot \sin\alpha}{16}$, where $ m := \min(r, \beta\cdot\diam(P))$.
\begin{figure}
\begin{center}
\includegraphics{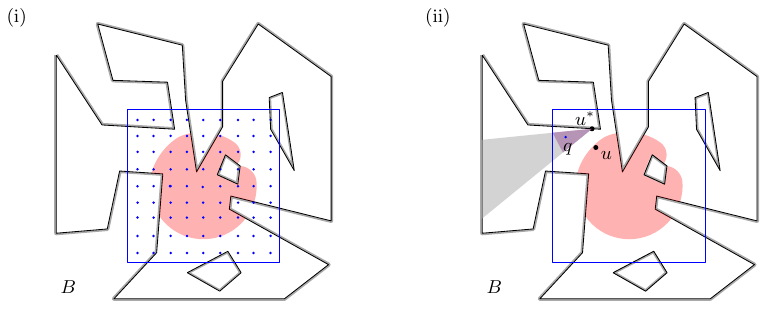}
\end{center}
\caption{Construction of the grid~$G$. The geodesic disk $D$ is shown in red. The figure is schematic and not to scale.}
\label{fig:bounded-dd}
\end{figure}

Now consider the set $\mathcal{D} = \{D_{\mathrm{g}}(q,r/2) : q \in G \cap P\}$ of geodesic balls of radius~$r/2$ centered at the grid points inside~$P$. Note that $|\mathcal{D}|\leq c(\alpha,\beta)$.
To prove \autoref{thm:alpha-beta-dd}, it suffices to show that the disks in $\mathcal{D}$ cover $D$.
%
%
In other words, we need to show that for any point $u \in D$
there exists a grid point $q \in G \cap P$ with $\gdist(u,q) \le r/2$. 
We need the following well-known fact. For completenes, we give a proof.
\begin{fact}[restate=triangleingrid,name=]\label{obs:grid-triangle}
    Let $G$ be a regular grid in $\mathbb{R}^2$ whose cells have side length~$x$. 
Let $T$ be an isosceles triangle whose two equal sides have length~$t$ and 
meet at angle~$\phi$. If $t \ge \frac{4x}{\sin\phi}$, then $T$ contains a grid point of~$G$.
\end{fact}

\begin{proof}
    Every point of $\mathbb{R}^2$ is within Euclidean distance $x$---in
    fact, within distance~$\frac{\sqrt{2}}{2}x$---of some grid point.
    It therefore suffices to show that $r_{\mathrm{in}}$, the radius of the incircle of 
    $T$, is at least $x$. 
    We have $r_{\mathrm{in}} = \frac{2 \cdot \mathrm{area}(T)}{\mathrm{per}(T)}$.
    Using $\mathrm{area}(T) = \frac{1}{2} t^{2} \sin\phi$ 
    and $\mathrm{per}(T)= 2t + 2t\sin(\phi/2)$ we obtain
    \[
        r_{\mathrm{in}}
            = \frac{t^2 \sin\phi}{2t(1+\sin(\phi/2))}
            \ge \frac{t\sin\phi}{4}.
    \]
    Thus, if $t \ge 4x/\sin\phi$ then $r_{\mathrm{in}} \ge x$, which finishes the proof.
\end{proof}
We now consider two cases.
\begin{itemize}
\item \emph{Case 1: $D(u,r/4) \subset P$.}
    Since $s\leq   r\cdot (\sin\alpha/16) \leq r/4$, 
    the disk $D(u,r/4)$ contains a grid point $q$. 
    Since $D(u,r/4) \subset P$ we therefore have $\gdist(u,q) =|uq| \le r/4 \le r/2$.
\item \emph{Case 2: $D(u,r/4)$ intersects $\partial P$.}
    This case is illustrated in Figure~\ref{fig:bounded-dd}(ii).
    Let $u^*\in\partial P$ be a point closest to~$u$. Then $|u u^{*}| < r/4$
    and $uu^*\subset P$. Let $T_{u^{*}}\subset P$ be an $\alpha$-fat triangle 
    with vertex $u^*$ and all of whose edges have length at least $\beta\cdot\diam(P)$, 
    which is guaranteed to exist because $P$ is $(\alpha,\beta)$-covered. 

    \smallskip
    If $r < \beta \cdot \diam(P)$ then every side of $T_{u^{*}}$ has length at least $r$.
    Thus, $T_{u^{*}}$ contains an isosceles $\alpha$-fat triangle $T$ with 
    apex~$u^*$ and side length $t=r/4$. Because $T\subset P\cap B$ and
    $ r/4 = m/4 \geq 4s/\sin\alpha$,
    the triangle~$T$ contains a grid point $q$ by \autoref{obs:grid-triangle}. 
    Since $|u^{*} q| \le r/4$, we thus have
    \[
        \gdist(u,q) \le |uu^{*}| + |u^{*}q| \le \frac{r}{4} + \frac{r}{4} = \frac{r}{2}.
    \]
    On the other hand, if $r \ge \beta \cdot \diam(P)$ then $T_{u^{*}}$ contains 
    an isosceles $\alpha$-fat triangle with side 
    length $t' = \beta \cdot\diam(P)/4 = m/4 \geq4s/\sin\alpha$. Again by
    \autoref{obs:grid-triangle} it contains a grid point $q'$, and since $t' \le r/4$ we also have $\gdist(u,q') \le r/2$.
\end{itemize}
We conclude that every $u\in D$ lies in some disk of $\mathcal{D}$, which
finishes the proof of \autoref{thm:alpha-beta-dd}.
\end{proof}

\subparagraph{Applications.}
The fact that $(\alpha,\beta)$-covered domains have bounded doubling
dimension immediately gives a plethora of results that improve on the state-of-the-art
for arbitrary (non-fat) domains. We mention results on spanners and WSPDs, because they form the basis of many other results.
\begin{corollary} \label{cor:applications}
Let $P$ be an $(\alpha,\beta)$-covered domain and let $S\subset P$ be a set of $m$ points.
\begin{enumerate}[(i)]
\item Let $G=(S,E)$ be the complete graph on~$S$ such that the weight of an edge~$(p,q)\in S\times S$ 
      is its geodesic distance $\gdist(p,q)$. Then there exists a $(1+\eps)$-spanner of $G$ 
      consisting of $m (1/\eps)^{O(1)}$ edges.
\item For any fixed $s>1$, there exists an $s$-well seperated pair decomposition (WSPD) of $S$ of size $m\cdot s^{O(1)}$.
\end{enumerate}
\end{corollary}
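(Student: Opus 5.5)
This corollary should follow directly from \autoref{thm:alpha-beta-dd} combined with known general results for metric spaces of bounded doubling dimension. The plan is as follows. Let $\lambda := \log c(\alpha,\beta)$ be the bound from \autoref{thm:alpha-beta-dd}, so that $(P,\gdist)$ has doubling dimension at most~$\lambda$, a constant depending only on $\alpha$ and $\beta$. The first step is to pass from $P$ to the finite point set $S$. The subspace $(S,\gdist)$ of a metric space of doubling dimension~$\lambda$ has doubling dimension at most $2\lambda$. To see this, take a ball $B_S(p,r)=D_{\mathrm{g}}(p,r)\cap S$ and cover $D_{\mathrm{g}}(p,r)$ by $c^2$ geodesic disks of radius $r/4$, where $c=c(\alpha,\beta)$, by applying the doubling property twice. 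For every such disk that meets $S$, pick a point of $S$ inside it and center a ball of radius $r/2$ at that point; by the triangle inequality this ball contains the whole small disk. Hence $(S,\gdist)$ is $c^2$-doubling.

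For~(i), I would invoke the known construction of $(1+\eps)$-spanners in doubling metrics. Examples are the net-tree based spanners of Har-Peled and Mendel, or those of Gottlieb and Roditty, or simply the greedy spanner. Each of these gives, for an $m$-point metric of doubling dimension $d$, a $(1+\eps)$-spanner with $m(1/\eps)^{O(d)}$ edges. Since $d\leq 2\lambda=O(1)$, the exponent is a constant, and the edge count is $m(1/\eps)^{O(1)}$. Note that the weighted complete graph $G$ is exactly the metric $(S,\gdist)$. Any spanner of that metric, with edges weighted by their geodesic distances, is therefore a spanner of~$G$.

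For~(ii), I would similarly invoke Talwar's result, in the form given by Har-Peled and Mendel. It states that an $m$-point metric of doubling dimension $d$ admits an $s$-WSPD with $m\cdot s^{O(d)}$ pairs, built from a net-tree. With $d=O(1)$ this gives size $m\cdot s^{O(1)}$.

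I expect no real obstacle here. The only point needing care is that the cited results are stated for finite metrics of bounded doubling dimension, while \autoref{thm:alpha-beta-dd} bounds the dimension of the ambient space $(P,\gdist)$. The subspace argument above handles this, and it only doubles the dimension, which is absorbed into the $O(1)$. Since the statement is existential, no computational model for geodesic distances is needed.
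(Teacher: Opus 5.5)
Your proposal is correct and follows essentially the same route as the paper. Both combine \autoref{thm:alpha-beta-dd} with known spanner and WSPD constructions for doubling metrics: the paper cites Gao, Guibas, and Nguyen for~(i) and Har-Peled and Mendel for~(ii), and your explicit check that $(S,\gdist)$ is $c(\alpha,\beta)^2$-doubling is a valid detail the paper leaves implicit.
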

\begin{proof}
Part~(i) follows from the result of Gao, Guibas, and Nguyen~\cite{DBLP:journals/comgeo/GaoGN06}, who showed that a set of $m$ points in a space of doubling dimension~$\mathit{dim}$ has a $(1+\eps)$-spanner of size $m (1/\eps)^{O(\mathit{dim})}$ edges. For part~(ii), we use the WSPD construction by Har-peled and Mendel~\cite{wspd}, which has size $m\cdot s^{O(\mathit{dim})}$.
 \end{proof}
Note that the bound on the spanner size in part (i) of \autoref{cor:applications} is linear
in $m$ and does not depend on the size of the polygon. A similar result is not possible 
for non-fat polygons: for any $\eps>0$ and any $m$ there is a polygon $P$ and a set 
of $m$ points such that any spanner of subquadratic size has spanning ratio at least~$2-\eps$.
(Take the polygon of \autoref{fig:fatness}(i) and put a point in each of the spikes.)
Also note that there are more results on spanners in spaces of bounded doubling dimension---for
example on the lightness of the spanner~\cite{DBLP:conf/soda/BorradaileLW19}---and that any such result immediately 
applies to point sets in an $(\alpha,\beta)$-covered polygonal domain.

\section{The perimeter of geodesically convex regions in fat polygonal domains}
\label{sec:geoconvex}

Recall that Bose et al.~\cite{bose11} showed that for any $(\alpha,\beta)$-covered simple polygon $P$ there exists a constant $\mu(\alpha,\beta)$, depending only on $\alpha$ and $\beta$, such that
$\per(P)\leq \mu(\alpha,\beta)\cdot\diam(P)$.
\autoref{thm:bounded-ratio} generalizes this result to arbitrary geodesically convex polygonal sets in~$P$, and extends it from simple polygons to general polygonal domains. 
As we will see in the next section, this is useful in several applications.
Note that we cannot hope to generalize the result to locally-fat polygons.
Indeed, there are locally-fat polygons for which we do not even have $\per(P)=O(\diam(P))$---the
polygon~$P_n$ from \autoref{sec:dd} is an example.
\begin{theorem}\label{thm:bounded-ratio}
Let $P$ be an $(\alpha,\beta)$-covered polygonal domain, possibly with holes.
Let $R \subseteq P$ be a geodesically convex set in $P$ such that
$\partial R \setminus \partial P$ is polygonal.
Then there exists a constant $\nu(\alpha,\beta)$ such that
$\per(R) \le \nu(\alpha,\beta)\cdot \diam(R)$.
\end{theorem}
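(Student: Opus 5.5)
Let $d:=\diam(R)$. The plan is to split $\partial R$ into the part on $\partial P$ and the part $\partial R\setminus\partial P$, and to bound each by $O(d)$. The key structural fact I will use is about the second part, which is polygonal by assumption. Because $R$ is geodesically convex, every maximal connected piece (chord) $\sigma$ of $\partial R\setminus\partial P$ is locally a shortest path in $P$. Its interior vertices can only bend around reflex vertices of $P$, so each such chord is a concave chain that wraps around $\partial P$. In particular, $\partial R$ consists of arcs of $\partial P$ and of geodesic chains whose bends lie on $\partial P$.

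I first set up a packing framework that works at two scales. Let $h:=\min(d,\beta\,\diam(P))$. Place a grid of cells of side $c\cdot h$, with $c=c(\alpha)$ small, over the square of side $3d$ containing $R$. There are $O((d/h)^2)$ such cells. If $d\le\beta\,\diam(P)$ this is $O(1)$. Otherwise $d/h\le 1/\beta$ (since $d\le\diam(P)$), so the count is again $O(1/\beta^2)$. It then suffices to show that $\partial R\cap C$ has length $O(h)$ for every cell $C$.

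To bound the length inside one cell, I will use witness triangles. For every point $x\in\partial R\cap\partial P$ in $C$, the witness triangle $T_x$ contains an $\alpha$-fat isosceles triangle of side $\ge h$ with apex $x$, lying in $P$. The portion of $\partial P$ inside $C$ therefore has the "cone property" of Bose et al.~\cite{bose11}: every boundary point sees a cone of aperture $\alpha$ and length $\gg$ the cell size into the interior. I will redo their argument locally. Partition directions into $O(1/\alpha)$ sectors. Boundary points whose cone contains a given fixed direction form, within $C$, a set that is monotone in the perpendicular direction. Such a set is essentially the graph of a function whose cones prevent steep oscillation, so its length is $O(h/\sin\alpha)$ by the standard projection argument. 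Summing over sectors gives $O(h)$ for $\partial P\cap C$, hence for $\partial R\cap\partial P\cap C$.

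The main obstacle is the geodesic pieces $\partial R\setminus\partial P$ inside $C$, since a priori there could be many of them. Each chord is a geodesic, and the part of a geodesic inside a cell consists of convex chains that bend only at vertices of $\partial P$. Any single geodesic subpath contained in $C$ with endpoints $a,b$ has length at most the length of the path that follows $\partial P\cap C$ plus $|ab|$. More usefully, a concave chain within $C$ has length at most $O(\diam C)$ times the number of its maximal convex pieces, and every bend vertex lies on $\partial P\cap C$, whose length is already controlled. So I will charge each chord piece either to $O(h)$, if it is the piece crossing $\partial C$ (each chord crosses $\partial C$ at most $O(1)$ times per local convex piece), or to the boundary arc of $\partial P$ that it "shortcuts". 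Geodesic convexity ensures that these shortcut arcs are disjoint for distinct chords, because they bound disjoint pockets of $P\setminus R$. Moreover, a shortest path is never longer than the boundary arc it replaces. The hard part is to make the number of chord pieces that enter $C$ but are not charged to $\partial P$ bounded by $O(1)$. For this I would argue with the fatness cones: two such pieces that are far apart would enclose a pocket of $P\setminus R$ whose boundary points need witness cones of size $\gg h$, which forces the pockets to be separated by $\Omega(h)$. That separation gives a packing bound within $C$. Combining everything yields $\per(R)=O(d)$ with a constant depending only on $\alpha$ and $\beta$.
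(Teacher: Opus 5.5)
Your bound for $\partial P\cap C$ is essentially the paper's argument: witness triangles, $O(1/\alpha)$ canonical directions, and an injective projection onto $\partial C$. This works provided you choose a direction at angle at least $\alpha/4$ inside the cone, so that the projection has bounded distortion. The paper does this by taking the middle of three canonical directions.

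The genuine gap is in the part you flag as the main obstacle, $\partial R\setminus\partial P$.
\begin{itemize}
\item \emph{Your structural fact is false for general geodesically convex sets.} If $P$ is a square and $R$ is a small convex polygon in its interior, then $\partial R\setminus\partial P=\partial R$ bends at vertices in the interior of $P$. Geodesic convexity only makes $R$ locally convex near interior points of $P$. So the pieces of $\partial R\setminus\partial P$ are arbitrary locally convex chains, and your description holds only for special sets such as relative convex hulls. Your claims that bends lie on $\partial P\cap C$, and that length is $O(\diam(C))$ per convex sub-chain, therefore have no footing.
\item \emph{Even for geodesic pieces the charging does not close.} The arc a chord ``shortcuts'' may lie far outside $C$, even outside the square containing $R$, so charging to it gives no per-cell bound. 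With holes, a chord need not cut off a pocket bounded by a single arc of $\partial P$ at all.
\item \emph{The separation claim is unfounded.} You assert that witness cones keep pockets of $P\setminus R$ at distance $\Omega(h)$ apart. But witness triangles lie in $P$ and may point into $R$, so they say nothing about the width of $P\setminus R$.
\end{itemize}
Here is a concrete counterexample to the counting. Put many tiny square holes just below the line $y=0$ inside a large square $P$, touching the line, at spacing much smaller than $h$. Let $R$ be the part of $P$ above the line. Then $P$ is $(\alpha,\beta)$-covered with constants independent of the number of holes, and $R$ is geodesically convex. Yet one cell meets unboundedly many pieces of $\partial R\setminus\partial P$. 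Most of them do not cross $\partial C$, and all border the same component of $P\setminus R$, so none cuts off its own pocket.

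The missing idea is a direct charging argument that uses no structure of $\partial R\setminus\partial P$ and needs no bound on the number of pieces.
\begin{itemize}
\item Take an edge of $\partial R\setminus\partial P$ in $C$ that makes angle at least $\pi/4$ with the $x$-axis and bounds $R$ from the left. From each of its points, shoot a horizontal ray to the right.
\item This ray cannot reach another such left-bounding edge before it hits $\partial P$ or leaves $C$. Otherwise the connecting segment lies in $P$, is therefore a shortest path, and by geodesic convexity lies in $R$. That contradicts the second edge bounding $R$ from the left.
\item Hence the map to the first hit on $(\partial P\cap C)\cup\partial C$ is injective and height-preserving. Because the edges are mostly vertical, their total length is $O(\per(C)+\|\partial P\cap C\|)$.
\item The remaining cases (right-bounding edges, and mostly horizontal edges with vertical rays) are symmetric.
\end{itemize}
Combining this with your bound on $\|\partial P\cap C\|$ and summing over the constant number of cells gives $\per(R)=O(\diam(R))$.
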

To prove \autoref{thm:bounded-ratio}, define $\Gamma_1 := \partial R \cap \partial P$ and 
$\Gamma_2 := \partial R \setminus \Gamma_1$, and note that $\per(R)=\|\Gamma_1\|+\|\Gamma_2\|$. Refer to Figure~\ref{fig:example-geodesically-convex} for an illustration of these sets. 
We will bound $\|\Gamma_1\|$ and $\|\Gamma_2\|$ separately.

\begin{figure}
\begin{center}
\includegraphics{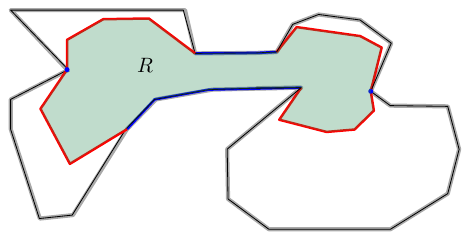}
\end{center}
\caption{The shaded region $R$ is geodesically convex. The blue curves (and points) are the connected components of $\Gamma_1$, while the red curves are the connected components of $\Gamma_2$.}
\label{fig:example-geodesically-convex}
\end{figure}

\subparagraph{Bounding $\|\Gamma_1\|$.}
Let $S$ be an axis-aligned square of side length~$\diam(R)$ that contains~$R$.
We partition $S$ into a regular $g\times g$ grid~$\mathcal{G}$, where
$g:=\left\lceil \tfrac{2}{\beta\sin\alpha}\right\rceil$.
Let $s:=\diam(R)/g$ be the side length of a grid cell.
Then $s\le \tfrac{\beta\sin\alpha}{2}\cdot\diam(R)$, and each cell~$C\in\mathcal{G}$ satisfies
\begin{equation}\label{eq:cell-diameter}
    \diam(C) = \sqrt{2}\,s < \beta\sin\alpha\cdot \diam(R).
\end{equation}
Moreover,
\begin{equation}\label{eq:grid-perimeter}
\sum_{C\in\mathcal{G}} \per(C)
 \;=\; g^2\cdot 4s
 \;=\; 4g \cdot \diam(R)
 \;=\; O\!\left(\tfrac{1}{\alpha\beta}\cdot \diam(R)\right).
\end{equation}
We will need the following lemma. 
\begin{lemma}\label{lem:triangle-ray}
    Let $p\in \partial P \cap C$ for some cell $C\in \mathcal{G}$. Then any 
    ray $\rho$ emanating from $p$ and going into the witness triangle $T_p$, 
    will hit $\partial C$ before exiting $T_p$.
\end{lemma}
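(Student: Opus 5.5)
The plan is a pure length comparison: every point where $\rho$ leaves $T_p$ is far from $p$, while every point of $C$ is close to $p$. Parametrize $\rho$ by arc length from $p$. Since $p\in C$ and $C$ is compact, $\rho$ leaves $C$ at some distance at most $\diam(C)$ from $p$. Since $p$ is a vertex of $T_p$ and $\rho$ enters $T_p$, the first exit point from $T_p$ lies on the side opposite $p$, at distance at least the distance from $p$ to the line through that side. It therefore suffices to show
\[
\mathrm{dist}(p,\text{opposite side of }T_p) \;>\; \diam(C).
\]

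For the left-hand side, let the two sides of $T_p$ at $p$ have lengths $a,b\ge \beta\cdot\diam(P)$. Let $\theta\ge\alpha$ be the angle at $p$ and let $\phi\ge\alpha$ be the angle at the endpoint of the side of length $a$. The distance from $p$ to the opposite side is the altitude $h = a\sin\phi$, which gives
\[
h \;\ge\; \beta\sin\alpha\cdot\diam(P) \;\ge\; \beta\sin\alpha\cdot\diam(R),
\]
where the last step uses $R\subseteq P$. Combining this with \eqref{eq:cell-diameter} yields $\diam(C) < \beta\sin\alpha\cdot\diam(R)\le h$.

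It remains to check the ordering of the two events along $\rho$. Every point of $\rho\cap T_p$ reached before the exit point other than $p$ itself is at distance at most the exit distance, and the exit distance is at least $h>\diam(C)$. The point of $\rho$ at distance exactly $\diam(C)^+$ (just beyond $\diam(C)$) is therefore still inside $T_p$ but outside $C$. Between $p\in C$ and that point, $\rho$ must cross $\partial C$. This crossing happens before $\rho$ exits $T_p$, as claimed.

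The only delicate step is the degenerate situation where $p\in\partial C$ and $\rho$ immediately leaves $C$. Then $\rho$ "hits" $\partial C$ at $p$ itself, which is consistent with the statement. If one insists on the last crossing of $\partial C$ instead, the same distance bound still works: every point of $C$ on $\rho$ lies within distance $\diam(C)<h$ of $p$. I also need that the exit from $T_p$ occurs on the opposite side rather than along a side incident to $p$. This is immediate when $\rho$ points into the interior of the angle at $p$. If $\rho$ runs along an incident side, exit happens at the far vertex, at distance at least $\beta\cdot\diam(P)\ge h$, so the bound still holds.
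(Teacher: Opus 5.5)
Your proof is correct and takes the same route as the paper. Both reduce the claim to $\diam(C) < \beta\sin\alpha\cdot\diam(R) \le h_p$, where $h_p$ is the altitude of $T_p$ from $p$, so the side of $T_p$ where $\rho$ exits lies farther than $\diam(C)$ from $p$ and hence outside $C$; the paper argues this by contradiction on a point of the opposite side lying in $C$, while you argue directly along the ray and spell out the step $\diam(R)\le\diam(P)$ that the paper leaves implicit.
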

\begin{proof}
It suffices to prove that the side $e$ of $T_p$ opposite to $p$ lies fully outside $C$. 
Assume for a contradiction that there exists a point $q\in e$ that is contained in~$C$. 
Since $p\in C$ and $q\in C$, we have that $\diam(C)\geq |pq|$. Moreover, $|pq| \geq h_p$,
where $h_p$ is the height of $T_p$ from~$p$. Since $h_p \geq \sin\alpha\cdot\beta \cdot \diam(R)$, we have 
\[
\diam(C) \geq |pq|\geq h_p \geq \sin\alpha\cdot\beta \cdot \diam(R),
\]
which contradicts Inequality~(1).
\end{proof}
Let $\mathcal{D} = \{d_i := i \frac{\alpha}{4} \mid i\in \mathbb{Z} \mbox{ and } 0\leq i\leq \frac{8\pi}{\alpha}\}$ 
be a set of \emph{canonical directions}.\footnote{With a slight abuse of terminology, we identify an angle $d_i\in \D$ with
the direction of a vector whose counterclockwise angle with the positive $x$-axis is~$d_i$.}
For a point $p\in P$ and a direction $d$, let $\rho(p,d)$ be the ray starting from~$p$ and going into
the direction~$d$. Let $C$ be a cell in $\mathcal{G}$ and consider a point $p\in \partial P\cap C$ 
that lies in the relative interior of an edge $e$ of~$\partial P$. 
We say a direction $d_i\in\mathcal{D}$ is \emph{good} for~$p$ 
if the following holds for the ray~$\rho(p,d_i)$:
\begin{enumerate}[(i)]
\item $\rho(p,d_i)$ makes an angle of at least $\alpha/4$ with the edge~$e$, and 
\item $\rho(p,d_i)$ hits $\partial C$ before hitting $\partial P$.
\end{enumerate}
The following lemma will allow us to bound $\|\Gamma_1\|$.
\begin{lemma}\label{lem:cell-projection}
Let $C$ be a cell in $\mathcal{G}$ and let $X\subset \partial P\cap C$ be a 
point set that consists of finitely many connected components. Assume that every $p\in X$ 
that is not a vertex of $P$, has at least one good direction in $\mathcal{D}$.
Then
\[
    \|X\|\leq c(\alpha)\cdot \per(C)
    \qquad\text{where}\qquad
    c(\alpha):=\tfrac{8\pi}{\alpha\sin(\alpha/4)}.
\]
\end{lemma}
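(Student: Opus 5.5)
We will charge the length of $X$ to the boundary of $C$ by projecting along canonical directions. Vertices of $P$ form a finite set of measure zero, so we may ignore them. For each direction $d_i\in\D$, let $X_i\subseteq X$ be the set of non-vertex points $p\in X$ for which $d_i$ is good. By assumption $X=\bigcup_i X_i$ up to measure zero, and $|\D|\le \frac{8\pi}{\alpha}+1$. It therefore suffices to show that, for each fixed $i$,
\[
\|X_i\| \le \frac{\per(C)}{\sin(\alpha/4)}.
\]
Summing over the $O(8\pi/\alpha)$ directions then gives the claimed bound $c(\alpha)\cdot\per(C)$; the exact constant is handled by a careful count of $|\D|$, or by noting that the directions $d_i$ and $d_i+\pi$ project onto the same strip.

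\textbf{Fixed direction.} Fix $d=d_i$ and define the map $f:X_i\to\partial C$ by letting $f(p)$ be the first point where $\rho(p,d)$ hits $\partial C$. By condition~(ii) of goodness, the open segment $p f(p)$ contains no point of $\partial P$.

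\emph{Injectivity.} Suppose $f(p)=f(p')$ with $p\neq p'$. Then $p$ and $p'$ lie on a common ray in direction $d$, say with $p'$ strictly between $p$ and $f(p)$. But $p'\in\partial P$, so $\rho(p,d)$ hits $\partial P$ before it hits $\partial C$, contradicting goodness of $d$ for $p$. Hence $f$ is injective.

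\emph{Length comparison.} Measure lengths orthogonally to $d$. Each piece of $X_i$ lies on an edge $e$ of $\partial P$ that makes an angle of at least $\alpha/4$ with $d$, by condition~(i). So its projection onto a line orthogonal to $d$ has length at least $\sin(\alpha/4)$ times its own length. The images under $f$ of different pieces are disjoint subsets of $\partial C$, because $f$ is injective. Their projections orthogonal to $d$ might overlap, but only if two image points lie on the same line parallel to $d$. Since $C$ is convex, such a line meets $\partial C$ in at most two points (or in a segment parallel to $d$, whose projection has measure zero).

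\textbf{The main obstacle.} The delicate step is to show that projected measure is preserved up to a factor of $2$ or so. Points $p$ with $f(p)$ on the same side of $C$ project exactly as $f(p)$ does, since moving along direction $d$ does not change the orthogonal projection. Hence the projection of $X_i$ orthogonal to $d$ equals the projection of $f(X_i)$, counted with multiplicity. Any line parallel to $d$ meets $X_i$ in at most one point: if it met $X_i$ in two points, the farther one along $d$ would block the ray from the nearer one. Therefore
\[
\sin(\alpha/4)\,\|X_i\| \le \text{length of the projection of } X_i \le \text{width of } C \text{ orthogonal to } d \le \per(C)/2,
\]
and the fixed-direction bound follows.

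\textbf{Measurability.} The splitting into the sets $X_i$ needs $X_i$ to be measurable. We handle this by using that $X$ has finitely many components, each a polygonal curve, and that goodness is determined piecewise along each edge. Condition~(i) depends only on the edge, and condition~(ii) changes only at finitely many points per edge since $P$ is polygonal. So each $X_i$ is a finite union of segments, which justifies summing the bounds over $i$.
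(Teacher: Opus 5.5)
Your proof is correct, but it finishes the fixed-direction bound differently from the paper.

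\textbf{What is shared.} Both arguments fix $d_i$, use condition~(i) to lose only a factor $\sin(\alpha/4)$ when projecting along $d_i$, and use condition~(ii) to get injectivity.

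\textbf{Where you differ.} The paper maps $p\in X_i$ to the exit point $f_i(p)\in\partial C$ and localizes with the vertical decomposition $\vd(P)$. In each trapezoid $\Delta$, the relevant points of $X_i$ lie on the bottom side and their images lie in $\partial C\cap\Delta$. Summing over trapezoids gives $\|X_i\|\le\|\partial C\|/\sin(\alpha/4)$. You instead use convexity of $C$ to show that every line parallel to $d_i$ meets $X_i$ at most once: a farther point would block the nearer point's ray inside $C$. So the orthogonal projection is injective on $X_i$, and $\sin(\alpha/4)\|X_i\|$ is at most the width of $C$, hence at most $\per(C)/2$.

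\textbf{What your route buys.} First, no decomposition is needed. Second, you never need good rays to enter $P$. The paper's remark that top sides of trapezoids contain no part of $X_i$ tacitly assumes this; it holds in the application, where the rays go into $T_p$, but it does not follow from the definition of a good direction. Third, you gain a factor~$2$, and this factor is what actually yields the stated constant. Since $|\D|=\lfloor 8\pi/\alpha\rfloor+1$ can exceed $8\pi/\alpha$, your suggested ``careful count'' of $\D$ alone would not suffice. Your factor $\tfrac12$ closes it, because $(8\pi/\alpha+1)\cdot\frac{1}{2\sin(\alpha/4)}\le\frac{8\pi}{\alpha\sin(\alpha/4)}$.

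\textbf{What the paper's route buys.} Its trapezoid-by-trapezoid charging to actual boundary length is reusable. In Lemma~\ref{lem:gamma2-in-cell}, rays may stop at $\partial P$ inside $C$, so several charged points can share a line and your global one-point-per-line property fails. A local charging scheme like the paper's is what carries over there.
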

\begin{proof}

For each $d_i\in\mathcal{D}$, let $X_i\subseteq X$ be the set of points where $d_i$ is good.
Note that $ \|X\| \leq \sum_{d_i\in\mathcal{D}} \|X_i\|$.

Fix a direction $d_i$ and consider $X_i$.
Assume wlog that $d_i$ is the vertically upward direction.
We define a mapping $f_i:X_i\rightarrow \partial C$ such that $f_i(p)$ is the point 
where $\rho(p,d_i)$ hits~$\partial C$; see Figure~\ref{fig:mapping}.

\begin{figure}
\begin{center}
\includegraphics{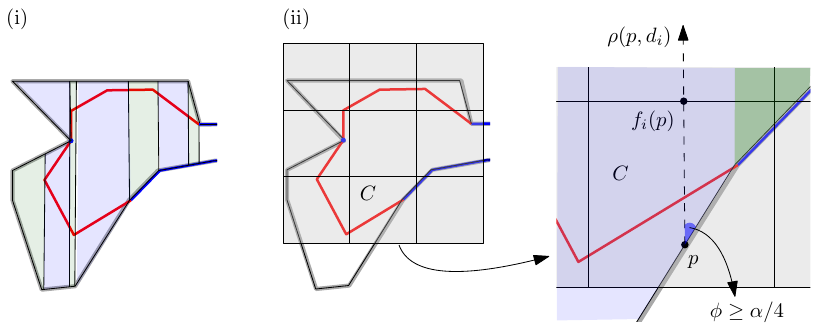}
\end{center}
\caption{(i) The vertical decomposition of part of the polygon from Figure~\ref{fig:example-geodesically-convex}. (ii) Part of the grid $\mathcal{G}$. In the cell $C$, the direction $d_i$ is good for $p$ (and in this example it is good for all points of $\partial P\cap C)$.}
\label{fig:mapping}
\end{figure}

Observe that $f_i$ is injective. Indeed, if $q = f_i(p)$ then $p$ is the 
unique point in $X_i$ that is hit by a vertically downward ray from $q$.
Let $\vd(P)$ be the vertical decomposition of~$P$, and consider a trapezoid~$\Delta\in\vd(P)$. 
Let $X_i(\Delta)$ be the part of $X_i$ contained in the bottom side of $\Delta$---the
top side cannot contain any part of $X_i$---and assume that $X_i(\Delta)\neq\emptyset$. 
Observe that $f_i(p)\in\Delta\cap \partial C$ for all $p\in X_i(\Delta)$.
Since the angle between $\rho(p,d_i)$ and the bottom side of~$\Delta$ is at least~$\alpha/4$,
we thus have $\|X_i(\Delta)\| \leq \tfrac{1}{\sin(\alpha/4)} \cdot \|\partial C\cap \Delta\|$.
Hence,
\[
\|X_i\| = \sum_{\Delta} \|X_i(\Delta)\| 
      \leq \tfrac{1}{\sin(\alpha/4)} \cdot \|\partial C \cap \Delta\| 
      = \tfrac{1}{\sin(\alpha/4)} \cdot \|\partial C \|.
\]
Finally, we bound $\|X\|$ by summing over all canonical directions:
\[
    \|X\| \leq \sum_{d_i\in\mathcal{D}} \|X_i\|
    \leq |\mathcal{D}| \cdot \tfrac{1}{\sin(\alpha/4)} \cdot \|\partial C\|
    \leq \tfrac{8\pi}{\alpha\sin(\alpha/4)}\cdot \|\partial C\|.
\]
This proves the lemma.
\end{proof}
We can now bound~$\|\bd P\cap S\|$.
Note that this immediately implies the same bound for $\|\Gamma_1\|$, since $\Gamma_1\subseteq \partial P\cap S$.
\begin{lemma}\label{lem:boundary-in-square}
$\|\partial P \cap S\| = O\!\left(\tfrac{1}{\beta\alpha^3}\cdot \diam(R)\right)$.
\end{lemma}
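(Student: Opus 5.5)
The plan is to apply \autoref{lem:cell-projection} to each cell $C\in\mathcal{G}$ with $X:=\partial P\cap C$, and then sum over the cells using \eqref{eq:grid-perimeter}. Since $P$ is a polygonal domain, $\partial P\cap C$ has finitely many connected components. Vertices of $P$ have length zero and are excluded by the lemma anyway. So the only thing to verify is that every point $p\in\partial P\cap C$ in the relative interior of an edge $e$ of $\partial P$ has a good canonical direction. Given that, \autoref{lem:cell-projection} yields $\|\partial P\cap C\|\le c(\alpha)\per(C)$, and summing gives
\[
\|\partial P\cap S\|\le \sum_{C\in\mathcal{G}} c(\alpha)\per(C) = c(\alpha)\cdot 4g\cdot\diam(R).
\]
Here $c(\alpha)=O(1/\alpha^2)$, since $\sin(\alpha/4)=\Theta(\alpha)$, and $g=O(1/(\alpha\beta))$. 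Together these give the claimed $O(\tfrac{1}{\beta\alpha^3}\diam(R))$. Points on cell boundaries lie in several cells; this only overcounts, which is harmless.

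To find a good direction, I take the witness triangle $T_p$. Its apex angle at $p$ is at least $\alpha$. Because $p$ lies in the interior of $e$ and $T_p\subset P$, the wedge of $T_p$ at $p$ lies on the interior side of $e$, and its two bounding rays make angles in $[0,\pi]$ with $e$. The canonical directions are spaced $\alpha/4$ apart, so the wedge, of angular width at least $\alpha$, contains at least three consecutive canonical directions strictly inside it. Among these I choose one making an angle at least $\alpha/4$ with $e$ on both sides; the middle portion of the wedge, obtained by trimming $\alpha/4$ from each end, still has width at least $\alpha/2$ and so contains a canonical direction. That direction makes angle at least $\alpha/4$ with the wedge sides. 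Since the wedge lies within the half-plane bounded by the line of $e$, its angle with $e$ is also at least $\alpha/4$, which gives condition (i).

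For condition (ii), the ray $\rho(p,d_i)$ goes into $T_p$. By \autoref{lem:triangle-ray}, it hits $\partial C$ before exiting $T_p$. Since $T_p\subset P$, the open part of the ray before it exits $T_p$ avoids $\partial P$, except possibly along the boundary of $T_p$. The direction is strictly inside the wedge, so the ray enters the interior of $T_p$ immediately, and the portion up to the exit point lies in $\myint(T_p)\subseteq\myint(P)$. Hence the ray hits $\partial C$ before $\partial P$.

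The main obstacle I expect is the degenerate situation where the ray might touch $\partial P$ at points of $T_p$'s boundary, or where $p$ lies on $\partial C$ itself, so that "hits $\partial C$ before $\partial P$" becomes ambiguous. I would handle this by working with the interior of $T_p$, which lies in the interior of $P$ because $T_p\subset P$ is open-interior contained. I would also treat points on $\partial C$ through the convention that the hit on $\partial C$ is the exit point from $C$; a measure-zero set of such $p$ does not affect lengths.
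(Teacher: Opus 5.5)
Your proposal is correct and follows the paper's proof step for step: apply \autoref{lem:cell-projection} to $X=\partial P\cap C$ in each cell, get a good direction from a canonical direction in the middle of the apex wedge of $T_p$ together with \autoref{lem:triangle-ray}, and sum using \eqref{eq:grid-perimeter}. Your argument that the ray runs through $\myint(T_p)\subseteq\myint(P)$ spells out a step the paper leaves implicit; one small inaccuracy is calling $\partial P\cap\partial C$ measure-zero, since an edge of $P$ may run along a grid line, but that part has length at most $\per(C)$, so the bound is unaffected.
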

\begin{proof}
Fix a grid cell $C\in \mathcal{G}$ and consider $X:=\partial P\cap C$.
Let $p\in X$ be a point that lies in the relative interior of an edge $e$ of $\partial P$, 
and let $T_p$ be its witness triangle. Since the angle of $T_p$ at $p$ is at least $\alpha$, there are 
at least three canonical directions $d_{i-1},d_i,d_{i+1}$ such that the rays 
$\rho(p,d_{i-1})$, $\rho(p,d_i)$, and $\rho(p,d_{i+1})$ go into~$T_p$.
The ray $\rho(p,d_i)$ must therefore make an angle of at least~$\alpha/4$ with the
edge~$e$. Lemma~\ref{lem:triangle-ray} implies that the ray $\rho(p,d_i)$ will hit $\partial C$ 
before exiting $T_p$. As a result, the direction $d_i$ is good for $p$. 
Hence, we can apply \autoref{lem:cell-projection} to $X$ and conclude that
\[
\|\partial P\cap C\| = \|X\| \le \tfrac{8\pi}{\alpha\sin(\alpha/4)}\cdot \per(C).
\]
Summing over all cells $C\in \mathcal{G}$ and using Equation~(\ref{eq:grid-perimeter}) gives
\[
\|\partial P\cap S\|
\le \tfrac{8\pi}{\alpha\sin(\alpha/4)}\cdot \sum_{C\in G}\per(C)
= O\!\left(\tfrac{1}{\beta\alpha^3}\cdot \diam(R)\right).
\]
\end{proof}
As mentioned, the fact that $\Gamma_1\subset \partial P \cap S$ immediately gives us the following corollary.
\begin{corollary}\label{lem:length-gamma1}
    The total length of $\Gamma_1$ is $O(\frac{1}{\beta \alpha^3}\cdot \diam(R))$.
\end{corollary}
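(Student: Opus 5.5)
The corollary should follow by containment plus monotonicity of length: I will show $\Gamma_1 \subseteq \partial P\cap S$ and then invoke \autoref{lem:boundary-in-square}. By definition $\Gamma_1 = \partial R\cap\partial P$, so $\Gamma_1\subseteq \partial P$. Since $R\subseteq S$ and $S$ is closed, the closure of $R$ lies in $S$, hence $\partial R\subseteq S$ and $\Gamma_1\subseteq S$.

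Length is monotone under inclusion, because $\|\cdot\|$ is one-dimensional Hausdorff measure restricted to the polygonal curve $\partial P$. Therefore $\|\Gamma_1\|\le \|\partial P\cap S\|$, and \autoref{lem:boundary-in-square} gives the bound $O(\tfrac{1}{\beta\alpha^3}\cdot\diam(R))$.

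The only point that needs any care is that $\Gamma_1$ may contain isolated points as well as arcs. Isolated points have zero length and so do not affect the bound. The number of connected components is finite, because $\partial R\setminus\partial P$ is polygonal. All of the real content therefore sits in \autoref{lem:boundary-in-square}, and this step is a direct consequence of it.
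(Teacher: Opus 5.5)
Your proposal is correct and is exactly the paper's argument: the containment $\Gamma_1\subseteq\partial P\cap S$, monotonicity of length, and then \autoref{lem:boundary-in-square}. Your remark about $\Gamma_1$ having finitely many components is not needed, since monotonicity of length under inclusion holds regardless.
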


\subparagraph{Bounding $\|\Gamma_2\|$.}
We now bound the contribution to the perimeter coming from $\Gamma_2$,
the part of $\partial R$ that lies in the interior of~$P$.

For a grid cell $C\in\mathcal{G}$, let $E_{\mathrm{v}}(C)$ be the set of edges of $\Gamma_2\cap C$
that are \emph{mostly vertical}, in the sense that their angle with the $x$-axis is at least~$\pi/4$.
Define $E_{\mathrm{h}}(C)$ analogously for \emph{mostly horizontal} edges. Note that
any edge is mostly vertical or mostly horizontal (or both, when it's slope is exactly~1).
\begin{lemma}\label{lem:gamma2-in-cell}
For any cell $C\in\mathcal{G}$ we have
\[
\|\Gamma_2\cap C\| = O\!\bigl(\per(C) + \|\partial P \cap C\|\bigr).
\]
\end{lemma}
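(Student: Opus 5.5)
We split $\Gamma_2\cap C$ into the mostly vertical edges $E_{\mathrm{v}}(C)$ and the mostly horizontal edges $E_{\mathrm{h}}(C)$, and show that each family has total length $O(\per(C)+\|\partial P\cap C\|)$. By symmetry it suffices to treat $E_{\mathrm{v}}(C)$. We charge these edges to horizontal lines through $C$, in the same projection spirit as \autoref{lem:cell-projection}. Since every edge $e\in E_{\mathrm{v}}(C)$ makes an angle of at least $\pi/4$ with the $x$-axis, its length is at most $\sqrt2$ times the length of its projection onto the $y$-axis. Hence
\[
\sum_{e\in E_{\mathrm{v}}(C)}\|e\| \le \sqrt2\int_{y} k(y)\,dy,
\]
where $k(y)$ is the number of edges of $E_{\mathrm{v}}(C)$ crossed by the horizontal line $\ell_y$ at height $y$. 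The integral ranges over an interval of length $s$.

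The key step is to bound $k(y)$ by $O(1+m(y))$, where $m(y)$ is the number of points of $\partial P\cap \ell_y\cap C$. Consider the segment $\ell_y\cap C$. It is cut by $\partial P$ into at most $m(y)+1$ maximal subsegments, each lying in $P$. Each such subsegment $\sigma$ is a straight segment inside $P$, so it is a shortest path in $P$. Take two points $a,b$ of $\sigma$ that lie in $R$. Geodesic convexity of $R$ then forces $ab\subseteq R$, because the geodesic between $a$ and $b$ is exactly $ab$. So $\sigma\cap R$ is connected, and $\sigma$ crosses $\partial R$ at most twice outside $\partial P$. Summing over the subsegments gives $k(y)\le 2(m(y)+1)$. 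Degenerate situations need separate care, namely edges of $\Gamma_2$ that are collinear with $\ell_y$ or that pass through vertices. They occur for only finitely many $y$, so they do not affect the integral.

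It remains to integrate. The term $\int 2\,dy$ equals $2s=O(\per(C))$. The term $\int m(y)\,dy$ is the $y$-projected length of $\partial P\cap C$ counted with multiplicity, and this is at most $\|\partial P\cap C\|$. Therefore $\|E_{\mathrm{v}}(C)\|\le 2\sqrt2\,(s+\|\partial P\cap C\|)$. The same argument with vertical lines bounds $E_{\mathrm{h}}(C)$. Adding the two bounds gives the lemma.

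The main obstacle is to make the step ``a straight segment in $P$ meets $R$ in a connected set'' rigorous. This requires the fact that a straight segment contained in $P$ is the unique shortest path between its endpoints. It also requires that geodesic convexity means that all shortest paths between points of $R$ stay in $R$, or at least that the unique one does. We must also handle the cases where $\ell_y$ touches $\partial P$ tangentially or runs along an edge of $\partial P$; since there are only finitely many such $y$, they are a measure-zero set and can be ignored.
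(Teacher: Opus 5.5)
Your proof is correct and is essentially the paper's argument. The paper shoots a horizontal ray from each point of a left-bounding (resp.\ right-bounding) mostly vertical edge and charges that point to the first point of $\partial P$ or $\partial C$ the ray hits; this uses exactly your observation that a horizontal segment inside $P$ is a shortest path, so geodesic convexity prevents the ray from meeting a second left-bounding edge first. Your bound $k(y)\le 2(m(y)+1)$, integrated over $y$, is the same injective charging written per horizontal line, and it makes explicit that the length comparison relies on the $\Gamma_2$-edge having slope at least $\pi/4$ and on the charged piece of $\partial P$ having at least the same vertical extent.
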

\begin{proof}
We prove the bound for $E_{\mathrm{v}}(C)$; the bound for $E_{\mathrm{h}}(C)$ is symmetric.
Since $\|\Gamma_2\cap C\| = \|E_{\mathrm{v}}(C)\| + \|E_{\mathrm{h}}(C)\|$, the lemma then follows.

Consider the edges of $E_{\mathrm{v}}(C)$ that bound~$R$ from the left, that is, the interior of $R$ lies locally to their right. 
Let $p$ be a point on such an edge and shoot a ray $\rho$ from~$p$ horizontally to the right.
We claim that $\rho$ cannot hit another edge of $E_{\mathrm{v}}(C)$ that bounds $R$ from the left
before hitting $\partial P$, and thus it either hits $\partial P$ first or exits~$C$.

Indeed, if $\rho$ would hit another such edge at a point~$q$ before hitting $\partial P$,
then the horizontal segment $pq$ is contained in~$P$. Hence, $pq$ is a shortest path in~$P$
between $p$ and~$q$, and because $R$ is geodesically convex, we must have $pq\subseteq R$.
This contradicts that $q$ lies on an edge that bounds~$R$ from the left.

\begin{figure}
\begin{center}
\includegraphics{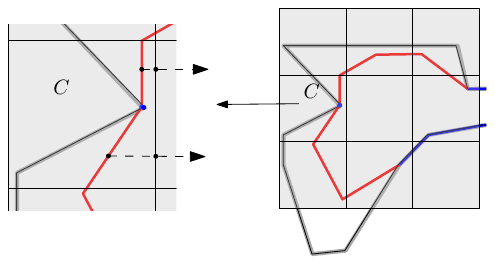}
\end{center}
\caption{The two red edges in cell $C$ bound $R$ from the left and are mostly vertical. The two shown horizontal rays hit $\partial C$ before hitting $\partial P$.}
\label{fig:mapping-gammatwo}
\end{figure}

We now charge any such point $p$ on a left-bounding edge to where the ray $\rho$ hits $\partial P$
(if it hits $\partial P$ first) or to where it exits~$C$
(if it does not hit $\partial P$ inside~$C$); see Figure~\ref{fig:mapping-gammatwo}. This is analogous to the method used in Lemma~\ref{lem:cell-projection}: since we consider mostly vertical edges, we know that $\rho$ forms an angle of at least $\pi/4$ with the corresponding edge of $\partial P$ (whereas in Lemma \ref{lem:cell-projection} this bound was $\alpha/4$). Therefore, by using a horizontal decomposition of $P$, we can similarly argue that the total length charged to $\partial P$
is $O(\|\partial P \cap C\|)$ and the total length charged to $\partial C$ is $O(\per(C))$.
In total we have
\[
\|E_{\mathrm{v}}(C)\| = O\!\bigl(\per(C) + \|\partial P \cap C\|\bigr).
\]
A similar argument applies to the edges of $E_{\mathrm{v}}(C)$ that bound $R$ from the right.
\end{proof}
Summing the bound from \autoref{lem:gamma2-in-cell} over all grid cells yields
\[
\|\Gamma_2\|
= \sum_{C\in\mathcal{G}} \|\Gamma_2\cap C\|
= O\!\left(\sum_{C\in\mathcal{G}} \per(C) + \|\partial P \cap S\|\right)
= O\!\left(\frac{1}{\beta\alpha^3}\cdot \diam(R)\right),
\]
where the last step uses Lemma~\ref{lem:boundary-in-square} and the bound on $\sum_{C\in\mathcal{G}}\per(C)$
given in Equation~(\ref{eq:grid-perimeter}).
\medskip

Since $\per(R)=\|\Gamma_1\|+\|\Gamma_2\|$,
we conclude that
\[
\per(R) = O\!\left(\tfrac{1}{\beta\alpha^3}\cdot \diam(R)\right).
\]
This proves \autoref{thm:bounded-ratio} with $\nu(\alpha,\beta)=O\!\left(\frac{1}{\beta\alpha^3}\right)$.

\subparagraph{Application to coresets for furthest-neighbor queries.}
Let $S$ be a point set in a simple polygon~$P$.
Recall that an $\eps$-coreset of $S$ for furthest-neighbor queries 
is a set $C\subseteq S$ such that for any query point $q \in P$ we have 
$\gdist(q,\fn(q,C)) \geq (1-\varepsilon) \cdot \gdist(q,\fn (q,S))$,
where $\fn(p,Q)$ denotes the furthest neighbor of $q$ in a set $Q$. 
By applying Theorem~\ref{thm:bounded-ratio} for $R= \rch(S)$, we obtain the following result.
\begin{corollary}[restate=coreset,name=]\label{cor:coreset}
Let $S$ be a point set of size $m$ in an $(\alpha,\beta)$-covered simple polygon~$P$ with $n$ vertices.
For any $0<\varepsilon \leq 1$, there exists an $\eps$-coreset $C\subseteq S$ of size $O(1/\varepsilon)$
for furthest-neighbor queries. The coreset $C$ can be constructed in $O(n+m\log(n+m))$ time.
\end{corollary}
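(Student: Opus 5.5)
The plan is to reduce the problem to Euclidean furthest-neighbor coresets along the boundary of the relative convex hull $\rch(S)$, using \autoref{thm:bounded-ratio} to control the length of that boundary. Let $R:=\rch(S)$ be the geodesic (relative) convex hull of $S$ in $P$. It is geodesically convex and its boundary away from $\partial P$ is polygonal. By \autoref{thm:bounded-ratio}, $\per(R)\le \nu(\alpha,\beta)\cdot\diam(R)$.

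\textbf{Step 1: furthest neighbors can be taken from $\partial R$.} I will use a standard fact for simple polygons: for any query $q\in P$, the geodesic furthest neighbor of $q$ in $S$ can be taken to be a vertex of $\rch(S)$. This holds because $\gdist(q,\cdot)$ is geodesically convex along shortest paths in a simple polygon. It therefore suffices to approximate distances to points of $S$ lying on $\partial R$.

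\textbf{Step 2: an additive-error coreset.} Walk along $\partial R$, whose length $L=\per(R)$ is $O(\diam(R))$. Cut $\partial R$ into $O(1/\eps)$ arcs, each of length at most $\eps\,\diam(R)/c$ for a suitable constant $c$ depending on $\nu$. From each arc that contains a point of $S$, keep one such point in $C$. Every point $s\in S\cap\partial R$ is then within geodesic distance $\eps\,\diam(R)/c$ of some point of $C$, measured along $\partial R\subseteq P$. By the triangle inequality for $\gdist$, for every $q\in P$,
\[
\gdist(q,\fn(q,C))\ge \gdist(q,\fn(q,S))-\eps\,\diam(R)/c .
\]

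\textbf{Step 3: turning additive error into relative error.} This is the main obstacle: I need $\gdist(q,\fn(q,S))\ge \Omega(\diam(R))$ for every query $q$. The diameter of $R$ is attained by two points of $S$ (two hull vertices), say $a,b$ with $|ab|=\diam(R)$. By the triangle inequality, $\max(\gdist(q,a),\gdist(q,b))\ge \gdist(a,b)/2\ge \diam(R)/2$. Hence $\gdist(q,\fn(q,S))\ge\diam(R)/2$, and choosing $c=2$ turns the additive error into a factor $(1-\eps)$. If Step 1 fails in some degenerate case, I would instead cut arcs of the whole boundary of $R$ and pick the hull vertices on them, which is the same argument.

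\textbf{Step 4: construction time.} Computing $\rch(S)$ in a simple polygon takes $O(n+m\log(n+m))$ time by known algorithms: triangulate in $O(n)$, then compute the relative hull, for example as in De~Berg and Theocharous. The perimeter walk and the arc-bucketing then cost linear time in the hull complexity, which is $O(n+m)$.
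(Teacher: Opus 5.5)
Your overall route is the paper's: take $R=\rch(S)$, use \autoref{thm:bounded-ratio} to get $O(1/\eps)$ representatives along $\partial R$, and turn the additive error into a relative one by showing $\gdist(q,\fn(q,S))$ is at least half a diameter. The paper spaces representatives at $\frac{\eps}{2\nu(\alpha,\beta)}\per(R)$ rather than $\eps\diam(R)/2$. This is equivalent, but it needs only the boundary traversal, whereas your version must also compute $\diam(R)$.

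The problem is Step~3. The claim that $\diam(R)$ is attained by two points of $S$ is false. The vertices of $\rch(S)$ include reflex vertices of $P$, around which shortest paths between points of $S$ bend, and these can lie far outside the Euclidean convex hull of $S$. For instance, take a U-shaped polygon and $S=\{a,b\}$ with $a,b$ near the tops of the two arms, close to the gap. Then $R=\pi(a,b)$ runs down to the two reflex vertices at the bottom of the gap, $\diam(R)$ is realized by $a$ and one of those reflex vertices, and $|ab|$ can be arbitrarily small compared to $\diam(R)$. If you instead let $a,b$ be arbitrary points of $R$ realizing $\diam(R)$, then $\max(\gdist(q,a),\gdist(q,b))\ge\diam(R)/2$ is true but says nothing about $\gdist(q,\fn(q,S))$. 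Your Step~1 only asserts that the furthest point of $S$ is a vertex of $R$, not that no point of $R$ is further from $q$ than $\fn(q,S)$.

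The missing ingredient is exactly that stronger statement: for every $q\in P$, $\max_{x\in R}\gdist(q,x)=\max_{s\in S}\gdist(q,s)$. It follows from the convexity fact you already cite, applied to all of $R$. Every boundary chain of $R$ is a shortest path between two consecutive points of $S$ on $\partial R$. Every interior point of $R$ lies on a straight segment inside $R$ (hence a shortest path in $P$) with endpoints on $\partial R$. With this in hand, either argument works:
\begin{itemize}
\item take any $a,b\in R$ with $|ab|=\diam(R)$ directly; or
\item as the paper does, observe that $\gdiam(R)$ is attained by a pair $a,b\in S$, and use $\gdist(q,\fn(q,S))\ge\frac12\gdist(a,b)=\frac12\gdiam(R)\ge\frac12\diam(R)$.
\end{itemize}
Once Step~3 is repaired this way, the rest of your argument goes through.
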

\begin{proof}
For any query point~$q$, its furthest neighbor $\fn(q,S)$ is a vertex of the relative convex 
hull~$\rch(S)$ of $S$ inside~$P$~\cite{DBLP:conf/compgeom/BergT24}.
Moreover, $\gdist(q,\fn(q,S))\geq \tfrac{1}{2} \gdiam(\rch(S))$. Hence, we can compute an $\eps$-coreset as follows.
First, we compute $\rch(S)$ in $O(n+m\log(n+m))$ time~\cite{toussaint-rch,DBLP:journals/dcg/Chazelle91,DBLP:journals/jcss/GuibasH89}. 
Then we traverse $\partial\,\rch(S)$ to select
a subset $C$ of $O(1/\eps)$ vertices of $\partial\,\rch(S)$ such that for any vertex $p'$ of $\rch(S)$, 
there is a vertex~$p\in C$ whose distance to $p'$ along $\partial\,\rch(S)$
is at most~$\tfrac{\eps}{2 \nu(\alpha,\beta)}\cdot \per(\rch(S))$, where $\nu(\alpha,\beta)$ is the constant
in \autoref{thm:bounded-ratio}.

To see that $C$ is an $\eps$-coreset, recall that $\per(\rch(S)) \leq \nu(\alpha,\beta) \cdot \diam(\rch(S))$
by \autoref{thm:bounded-ratio}. Now consider a query point $q$. Let $p$ be a point in~$C$ whose
distance to $\fn(q,S)$ along $\partial\,\rch(S)$ is at most~$\tfrac{\eps}{2 \nu(\alpha,\beta)}\cdot \per(\rch(S))$. 
Then $\gdist(p,\fn(q,S))\leq \tfrac{\eps}{2 \nu(\alpha,\beta)}\cdot \per(\rch(S))$, and so
\begin{align*}
\gdist(q,\fn(q,S)) & \leq \  \gdist(q,p) + \gdist(p,\fn(q,S)) \\
              & \leq \  \gdist(q,\fn(q,C)) + \tfrac{\eps}{2 \nu(\alpha,\beta)}\cdot \per(\rch(S)) \\
              & \leq \   \gdist(q,\fn(q,C)) + \eps\cdot \tfrac{1}{2} \cdot \diam(\rch(S)) \\
              & \leq \   \gdist(q,\fn(q,C)) + \eps\cdot \gdist(q,\fn(q,S)),
\end{align*}
which proves that $C$ is an $\eps$-coreset.
\end{proof}

\section{Closest pair in an $(\alpha,\beta)$-covered polygon}
\label{sec:closest-pair}
Let $Q$ be a set of $m$ points contained in an $(\alpha,\beta)$-covered polygon~$P$. The closest-pair problem asks for
a pair of points $p,q\in Q$ minimizing the geodesic distance $\gdist(p,q)$. Using Theorem~\ref{thm:bounded-ratio}
we prove the following lemma, which will allow us to adapt the classical linear-time closest-pair algorithm for 
the plane~\cite{dietzfelbinger97} to the geodesic metric in $P$.
\begin{figure}
\begin{center}
\includegraphics{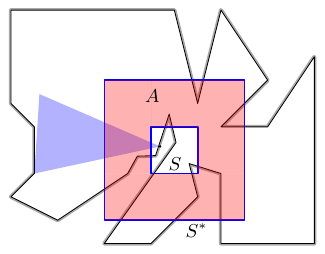}
\end{center}
\caption{Illustration for the proof of \autoref{lem:square-partition}.}
\label{fig:partition}
\end{figure}
\begin{lemma}\label{lem:square-partition}
Let $P$ be an $(\alpha,\beta)$-covered simple polygon, and let $S$ be a
square that intersects the interior of $P$. Let $C_1,\dots,C_m$ be the connected components of
$S\cap P$. Then there exists a partition of $\{C_1,\dots,C_m\}$ into $t= O(1)$
classes $\mathcal{C}_1,\dots,\mathcal{C}_t$ such that the following holds:
\begin{quotation}
\noindent There is constant $M$ such that for every class $\mathcal{C}_j$ and every pair of components
$C_a,C_b\in\mathcal{C}_j$ we have
$\gdist(x,y)\leq M \cdot \diam(S)$ for all points $x\in C_a$ and $y\in C_b$.
\end{quotation}
\end{lemma}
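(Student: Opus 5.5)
The plan is to split into two regimes according to how large $S$ is compared to $P$. In each regime we group the components $C_1,\dots,C_m$ by a coarser connectivity relation. Let $\mu(\alpha,\beta)$ be the constant of Bose et al.~\cite{bose11}, so that $\per(P)\le\mu(\alpha,\beta)\cdot\diam(P)$.

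\emph{Large regime: $\diam(S)\ge \tfrac{\beta}{4}\diam(P)$.} Put all components into a single class. The path from $x$ to $y$ along $\partial P$ is not available for interior points, so we use a segment-replacement argument. Take the segment $xy$. Every maximal piece of $xy$ outside $P$ is replaced by an arc of $\partial P$ between its endpoints; since $P$ is simple, $\partial P$ is a single closed curve and such an arc exists. This gives $\gdist(x,y)\le |xy|+\per(P)\le (1+\mu(\alpha,\beta))\diam(P)=O(\diam(S)/\beta)$.

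\emph{Small regime: $\diam(S)<\tfrac{\beta}{4}\diam(P)$.} Let $S'$ be the square with the same center as $S$ and three times its side length. Declare $C_a$ and $C_b$ to be in the same class when they lie in the same connected component $K$ of $S'\cap P$.

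First, the geodesic bound inside a class. Since $P$ is simple and $S'$ is convex, every component $K$ of $S'\cap P$ is simply connected. Its boundary consists of pieces of $\partial S'$ and of $\partial P\cap S'$. The square $S'$ has side comparable to $\diam(S)$, and in this regime we have $\diam(S)<\beta\diam(P)$. So I would rerun the grid argument of Lemma~\ref{lem:boundary-in-square} on $S'$ with $\diam(S)$ in place of $\diam(R)$. The only property it used is that the cells are small relative to $\beta\sin\alpha\cdot\diam(P)$, which is what makes Lemma~\ref{lem:triangle-ray} go through; this holds after adjusting the grid constant. That yields $\|\partial P\cap S'\|=O(\diam(S)/(\beta\alpha^3))$, and hence $\per(K)=O(\diam(S))$.

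Given $x,y\in K$, apply the same segment-replacement trick: each excursion of $xy$ out of $K$ is replaced by an arc of $\partial K$. Because $K$ is simply connected, each such arc lies on one boundary curve. This gives $\gdist(x,y)\le|xy|+\per(K)=O(\diam(S))$, which fixes $M$.

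Second, the number of classes. A component $K$ that meets $S$ but contains no point of $\partial P$ must equal $S'$, and then there is only one class. Otherwise, pick $x\in K\cap S$ and walk inside $K$ toward $\partial P$ to obtain a point $p\in\partial P$ in $K$. If necessary, pick $p$ as the point of $\partial P$ nearest to $x$, so that $p$ lies within distance $\diam(S)$ of $S$; one should adjust the factor of $S'$ so that $p$ still lies well inside $S'$.

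The witness triangle $T_p$ has sides of length at least $\beta\diam(P)>4\diam(S)$. Clip it to an isosceles $\alpha$-fat triangle with apex $p$ and legs of length $c\cdot\diam(S)$, for a small constant $c$, so that the triangle stays inside $S'$. It lies in $P$, is connected to $p$, and therefore lies in $K$. Distinct components are disjoint, so these triangles are disjoint. Each has area $\Omega(\sin\alpha\cdot\diam(S)^2)$ inside $S'$, whose area is $O(\diam(S)^2)$. Hence $t=O(1/\sin\alpha)$.

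The main obstacle is making the transfer of Lemma~\ref{lem:boundary-in-square} to an arbitrary square $S'$ rigorous, and choosing the enlargement factor of $S'$ consistently. The enlargement must be large enough that the clipped witness triangles and the points $p$ stay inside $S'$, yet $S'$ must remain small compared to $\beta\diam(P)$. I would first fix the regime threshold, for example $\diam(S)<\tfrac{\beta}{10}\diam(P)$, and then check both constraints against it.
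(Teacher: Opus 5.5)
Your proof is correct. In the small regime it takes a genuinely different route from the paper.

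\emph{The paper's route.} The paper first shows that each component $C_i$ of $S\cap P$ is geodesically convex; this is where simplicity of $P$ is used. It then applies Theorem~\ref{thm:bounded-ratio} to get $\per(C_i)\le\nu(\alpha,\beta)\cdot\diam(S)$. The classes come from stabbing the sets $T_i\cap A$ (witness triangles intersected with the annulus between $S$ and the tripled square) with $O(1)$ points $q_j$. Two components in the same class are joined by the path $x\to p_a\to q_j\to p_b\to y$ through the convex witness triangles.

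\emph{Your route.} You take the classes to be the components $K$ of $S'\cap P$ and count them by packing disjoint clipped witness triangles into $S'$. You bound distances inside a class by walking along $\partial K$, using $\per(K)\le\per(S')+\|\partial P\cap S'\|$. This needs only the $\Gamma_1$ machinery (Lemmas~\ref{lem:triangle-ray} and~\ref{lem:cell-projection}). It avoids both the geodesic-convexity argument and the $\Gamma_2$ half of Theorem~\ref{thm:bounded-ratio}. Your packing count is also cleaner than the paper's bare assertion that pieces of the annulus with large area can be stabbed by $O(1)$ points. In exchange, the paper can use Theorem~\ref{thm:bounded-ratio} as a black box and gets an explicit connecting path.

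A few small repairs:
\begin{itemize}
\item The obstacle you flag is not real. The proof of Lemma~\ref{lem:triangle-ray} only needs cells of diameter below $\beta\sin\alpha\cdot\diam(P)$; the $\diam(R)$ there is just a lower bound for $\diam(P)$. So a $\lceil 1/\sin\alpha\rceil\times\lceil 1/\sin\alpha\rceil$ grid on $S'$ suffices in your regime.
\item You need neither the nearest-point argument nor a larger enlargement factor. First dispose of the case $S\subseteq P$, where $m=1$. Otherwise every $C_i$ meets $\partial P$, since a component inside $\myint(P)$ would be open and closed in $S$. So pick $p\in C_i\cap\partial P\subseteq S$. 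Legs of length at most the side of $S$ then keep the clipped triangle inside $S'$.
\item Replacing each excursion of $xy$ by its own boundary arc can double count when the arcs are nested. Instead, take the first and last points of $xy$ on $\partial K$ (respectively $\partial P$) and walk between them along the connected boundary. This gives $|xy|+\per(K)$ directly.
\end{itemize}
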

\begin{proof}
If $S\subset P$ then $m=1$ and the statement is trivial since any two
points $x,y\in S\cap P$ can be connected by a segment of length at most~$\diam(S)$. 
Hence, we assume that $\partial P$ intersects the interior of $S$. 
Let $\ell$ be the side-length of~$S$. 
We have two cases.

\medskip\noindent
\emph{Case 1: $\ell \ge \frac{\beta\cdot \diam(P)}{8}$.} Then, for any $x,y\in S\cap P$,
\[
  \gdist(x,y)\le \gdiam(P) 
          \le \per(P) \leq \mu(\alpha,\beta) \cdot\diam(P)
          \le \tfrac{8\mu(\alpha,\beta)}{\beta}\cdot \ell
          \le \tfrac{8\mu(\alpha,\beta)}{\beta}\cdot \diam(S).
\]
So in this case we may put all components $C_i$ into a single class and the lemma holds.

\medskip\noindent
\emph{Case 2: $\ell < \frac{\beta\cdot\diam(P)}{8}$.}

Observe that any connected component $C_i$ of $S\cap P$, is a geodesically convex region. Indeed, if there were $x,y\in C_i$ such that a shortest path $\pi(x,y)$ leaves $C_i$, 
then $\pi(x,y)$ must exit  $S$ at a point $a$ and re-enter $S$ at a point $b$. 
Let $\pi(a,b) \subset \pi(x,y)$ be the subpath of $\pi(x,y)$ from~$a$ to~$b$.
Since $C_i$ is connected, there is a path $\pi'(a,b) \subset C_i$. Since $P$ is
a simple polygon, the cycle $\pi'(a,b)\cup \pi(a,b)$ does not contain a part of $\partial P$.
But then we can replace $\pi(a,b)$ by the part of $\partial S$ that lies
inside the cycle, and obtain a shorter path from $x$ to $y$, which is a contradiction.
Hence, $C_i$ must be a geodesically convex, as claimed.
Applying Theorem~\ref{thm:bounded-ratio} with $R:=C_i$ we have $\per(C_i) \le \nu(\alpha,\beta)\cdot\diam(C_i)$. 
Since $C_i\subseteq S$, we have $\diam(C_i)\le \diam(S)$, and hence
for any $x,y\in C_i$ 
\begin{equation}\label{eq:path-inside}
  \gdist(x,y)
  \le \per(C_i)
  \le \nu(\alpha,\beta)\cdot\diam(C_i)
  \le \nu(\alpha,\beta)\cdot\diam(S).
\end{equation}
Since $\partial P$ meets the interior of $S$,
every component $C_i$ satisfies $C_i\cap\partial P\neq\emptyset$. For each
$i$ choose a point $p_i\in C_i\cap\partial P$ and let $T_i$ be its witness
triangle. Let $S^*$ be the square with the same center as $S$ and with side length $3\ell$, 
and let $A := S^* \setminus \myint(S)$ be the annulus between the two squares.
The maximum Euclidean distance from a point in $S$ to a point in $A$ is $2\cdot \diam(S)$; see Figure~\ref{fig:partition}. 
An edge of $T_i$ incident to $p_i$ has length at least $\beta\cdot\diam(P) \geq 8\ell \geq 4\cdot\diam(S)$. 
This implies that the two sides of $T_i$ incident to $p_i$ exit $S^*$, and because the angle between them is 
at least $\alpha$ we can conclude that $\area(T_i\cap A) = \Omega(\area(A))$. This implies that we can stab 
the collection of objects $\{T_i\cap A\}_{i=1}^m$, with a constant number of points $q_1,q_2,...,q_t$.
Our classes $\mathcal{C}_1,\mathcal{C}_2,...,\mathcal{C}_{t}$ are now defined by assigning
each component $C_j$ to the class~$\C_i$ such that $i$ is the smallest index for which $q_i$ stabs~$T_j\cap A$.

Now fix a class $\mathcal{C}_j$ and components $C_a,C_b\in\mathcal{C}_j$.
Let $x\in C_a$ and $y\in C_b$. To bound $\gdist(x,y)$, consider the path
$\pi = \pi(x,p_a)\cup p_a q_j\cup q_jp_b \cup \pi(p_b,y)$.
By Inequality (\ref{eq:path-inside}) we have 
\[
\|\pi(x,p_a) \| + \|\pi(p_b,y)\|=\gdist(x,p_a)+\gdist(y,p_b) = 2\nu(\alpha,\beta) \cdot \diam(S).
\]
Moreover the points $p_a,p_b,q_j$  lie in $S^*$, which implies that 
$|p_a q_j| \le \diam(S^*) =3\cdot\diam(S)$ and $ |q_jp_b| \le 3\cdot\diam(S)$.
Therefore $\gdist(x,y) \le (2\nu(\alpha,\beta)+6) \cdot \diam(S)$.
\medskip

Combining the two cases, the lemma holds for all squares $S$.
\end{proof}
\subparagraph{The algorithm.}
We adapt the algorithm by Dietzfelbinger~\etal~\cite{dietzfelbinger97}
that solves the problem in expected linear time in the Euclidean plane.
We process the points of $Q$ in random order, and maintain the current minimum distance
$\delta$ and a grid of squares of side length $s:=\delta/(2M)$, stored in a hash table.
As long as $\delta$ is the smallest geodesic distance among the processed
points, \autoref{lem:square-partition} implies that each grid cell contains at most $t=O(1)$ points.
Indeed, if $|X(S)|>t$ then two points of $X(S)$ would lie in
components belonging to the same class of Lemma~\ref{lem:square-partition},
implying $\gdist(x,y)\le M\cdot\diam(S)=\tfrac{\sqrt2}{2}\delta<\delta$, a
contradiction.

Now let $q_1,\ldots,q_m$ be the points from $Q$, in random order, and
suppose that the insertion of some point~$q_j$ changes the closest pair. 
Let $S$ be the cell of the current grid that contains the new point~$q_j$,
and let $q_i,q_j$ be the new closest pair. Since $\gdist(q_i,q_j)< \delta$ we know that 
$q_i$ must lie in cell $S'$ such that the Euclidean distance between $S$ and $S'$ is
smaller than~$\delta$. Thus, to find $q_i$ we can restrict our
attention to the $O(M^2)$ cells within distance $\delta$ of $S$ (including $S$ itself).
Since each cell contains at most $t$ points, only $O(1)$ candidates must be checked,
which we can do in $O(\log n)$ time using a shortest-path query.
When a smaller distance is found, we update $\delta$ and rebuild the grid.
The analysis of the expected running time is identical to the Euclidean case
\cite{dietzfelbinger97}, except that computing the distance between two given points incurs a logarithmic
overhead (after preprocessing $P$ for shortest-path queries in $O(n)$ time~\cite{DBLP:journals/jcss/GuibasH89}).
We obtain the following theorem.
\begin{theorem}
Let $P$ be an $(\alpha,\beta)$-covered simple polygon with $n$ vertices and
let $Q$ be a set of $m$ points in $P$. A closest pair of $Q$ under the geodesic distance
can be found in expected time $O(n+\kappa(\alpha,\beta)m\log n)$, where
$\kappa(\alpha,\beta)$ is a constant depending only on $\alpha$ and $\beta$.
\end{theorem}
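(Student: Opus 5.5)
The proof follows the randomized incremental framework of Dietzfelbinger~\etal, with \autoref{lem:square-partition} playing the role of the Euclidean packing argument. First, preprocess $P$ in $O(n)$ time for two-point geodesic distance queries (Guibas--Hershberger), so that any $\gdist(x,y)$ is available in $O(\log n)$ time. Let $M$ be the constant of \autoref{lem:square-partition}, namely $M=\max\bigl(\tfrac{8\mu(\alpha,\beta)}{\beta},\,2\nu(\alpha,\beta)+6\bigr)$, and let $t$ be its number of classes. Randomly permute $Q$ into $q_1,\dots,q_m$. Maintain the current closest-pair distance $\delta_i$ of $\{q_1,\dots,q_i\}$, together with a hash table of the nonempty cells of a grid of side $s=\delta_i/(2M)$.

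The key invariant is that every grid cell $S$ contains at most $t$ of the processed points. Two processed points in the same class of components of $S\cap P$ would have geodesic distance at most $M\diam(S)=\tfrac{\sqrt2}{2}\delta_i<\delta_i$, which is impossible. The case where $S$ misses $\myint(P)$ is trivial, since such cells hold no points or only boundary points handled the same way.

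When $q_{i+1}$ arrives, locate its cell and examine all cells whose Euclidean distance to that cell is below $\delta_i$. There are $O(M^2)$ such cells, each with at most $t$ points. Any point at geodesic distance less than $\delta_i$ from $q_{i+1}$ is also at Euclidean distance less than $\delta_i$, so it lies in one of these cells. This gives $O(tM^2)$ geodesic distance queries, costing $O(tM^2\log n)$ time, which determine whether $\delta$ decreases. If it does, rebuild the grid on $q_1,\dots,q_{i+1}$ with hashing in $O(i)$ expected time.

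For the analysis, use backwards analysis. The closest pair of $\{q_1,\dots,q_{i+1}\}$ changes only if $q_{i+1}$ belongs to a closest pair. Fixing one closest pair, this happens with probability at most $2/(i+1)$; ties can be handled by a consistent tie-breaking rule. Hence the expected rebuild cost is $\sum_i O(i)\cdot\tfrac{2}{i+1}=O(m)$. Adding the per-insertion query cost gives $O(n+tM^2m\log n)$, so $\kappa(\alpha,\beta)=O(tM^2)$.

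The main thing to check is that \autoref{lem:square-partition} transfers cleanly. It bounds the distance only between points in components of the same class, so the invariant of at most $t$ points per cell holds. In addition, $t$ and $M$ depend only on $\alpha$ and $\beta$, because the stabbing number in Case~2 of that lemma comes from an area bound that depends only on $\alpha$.
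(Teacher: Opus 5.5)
Your proposal is correct and follows the paper's argument essentially step for step. Both use a randomized incremental grid of side $\delta/(2M)$ stored in a hash table, the pigeonhole bound of at most $t$ points per cell from \autoref{lem:square-partition}, a scan of the $O(M^2)$ nearby cells using $O(\log n)$-time geodesic distance queries after $O(n)$ preprocessing, and a rebuild whenever $\delta$ drops. The only difference is that you spell out the backwards analysis (a rebuild at step $i+1$ has probability at most $2/(i+1)$) and the explicit constant $\kappa(\alpha,\beta)=O(tM^2)$, both of which the paper leaves implicit by deferring to Dietzfelbinger et al.
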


\bibliography{ref}





\end{document}